\documentclass{llncs}

\pagestyle{plain}

\raggedbottom

\newif\ifblind\blindfalse
\newif\iffinal\finaltrue
\newif\iffull\fulltrue

\usepackage{amsmath, amssymb, stmaryrd, alltt}
\usepackage{braket}
\usepackage{empheq}
\usepackage{algorithm,algpseudocode}

\usepackage{color,bcprules,cite,wrapfig,url}




\newcommand{\Span}[1]{\langle{#1}\rangle}
\DeclareMathOperator{\WHILE}{\mathbf{while}}

\newcommand{\NAT}{\mathbb N}
\newcommand{\GRP}{\mathbb G}
\newcommand{\REAL}{\mathbb R}
\newcommand{\WPA}[2]{\llbracket #1 \rrbracket^{\sharp}_{#2}}
\newcommand{\WPAH}[2]{\llbracket {#1} \rrbracket^{\sharp \mathtt{H}}_{#2}}
\newcommand{\WPC}[2]{\llbracket #1 \rrbracket^{\sharp\mathtt{c}}_{#2}}
\newcommand{\WPCH}[2]{\llbracket #1 \rrbracket^{\sharp\mathtt{cH}}_{#2}}
\newcommand{\REM}{\mathbf{Rem}}

\newcommand{\REMP}{{\mathbf{Rem}^{\mathtt{par}}}}
\newcommand{\REMPH}[1]{{\mathbf{Rem}^{\mathtt{parH}}_{#1}}}
\newcommand{\SKIP}{\mathbf{skip}}
\newcommand{\ASSIGN}[2]{#1 \mathtt{:=} #2}
\newcommand{\SEQ}[2]{#1; #2}
\DeclareMathOperator{\IF}{\mathbf{if}}
\DeclareMathOperator{\THEN}{\mathbf{then}}
\DeclareMathOperator{\ELSE}{\mathbf{else}}
\newcommand{\IFTHENELSE}[3]{\mathbf{if}\, \allowbreak #1\,
  \allowbreak \mathbf{then}\,
  \allowbreak #2 \, \allowbreak \mathbf{else}\, \allowbreak #3}
\DeclareMathOperator{\WHL}{\mathbf{while}}
\DeclareMathOperator{\DO}{\mathbf{do}}
\newcommand{\SWHILE}[2]{\WHL\, #1\, \DO\, #2}
\newcommand{\PDEG}{\mathbf{deg}}
\newcommand{\DDEG}{\mathbf{gdeg}}
\newcommand{\LM}{\mathrm{LM}}

\newcommand{\INVINF}{\textsc{InvInf}}
\newcommand{\INVINFH}{\textsc{InvInf}^{\mathtt{H}}}

\newcommand{\HLE}{\preceq}

\makeatletter
\def\old@comma{,}
\catcode`\,=13
\def,{%
  \ifmmode%
  \old@comma\discretionary{}{}{}%
  \else%
    \old@comma%
  \fi%
}
\makeatother
\def\enumSet#1{\{\,#1\,\}}

\title{Generalized Homogeneous Polynomials for\\ Efficient
  Template-Based\\ Nonlinear Invariant Synthesis}

\ifblind
\author{}
\institute{}
\else
\author{Kensuke Kojima\inst{1,2} \and Minoru Kinoshita\inst{1,4} \and Kohei Suenaga\inst{1,3}}
\institute{Kyoto University \and JST CREST \and JST PRESTO \and KLab Inc.}
\fi

\newcommand\POLY{K}

\newcommand\VARS{\mathbf{Var}}
\newcommand\sem[1]{\llbracket #1 \rrbracket}

\newcommand\INT{\mathbb{Z}}
\newcommand\p{\vdash}

\newcommand\ra{\rightarrow}

\newcommand\DT{\mathit{dt}}

\iffinal
\newcommand\comment[1]{}
\newcommand\todo[1]{}
\else
\newcommand\comment[1]{\color{blue}{\bf{COMMENT: {#1} }}\color{black}}
\newcommand\todo[1]{\color{red}{\textbf{#1}}\color{black}}
\fi

\newcommand\CARD[1]{|{#1}|}

\newcommand\STATES{\mathbf{St}}

\newcommand\POWERSET{\mathcal{P}}

\newcommand\GDEGS{\mathbf{GDeg}}

\newcommand\EQCONSTR[2]{\langle #1 \equiv #2 \rangle}


\DeclareMathOperator\vorder{\sqsubseteq^\sharp}
\DeclareMathOperator\setdiff{\backslash}

\DeclareMathOperator\MOD{\mathbf{mod}}

\newcommand\FREEFALL{{\mathit{fall}}}

\newcommand\floor[1]{\lfloor {#1} \rfloor}

\newenvironment{pfof}[1]{\paragraph{Proof of {#1}}}{\qed}

\newcommand\PT{\mathit{PT}}
\newcommand\GDEGV{\mathbf{GDegV}}

\begin{document}

\maketitle

\allowdisplaybreaks

\begin{abstract}
  
 The \emph{template-based} method is one of the most successful
 approaches to algebraic invariant synthesis.  In this method, an
 algorithm designates a \emph{template polynomial} \(p\) over program
 variables, generates constraints for \(p=0\) to be an invariant, and
 solves the generated constraints.  However, this approach often suffers
 from an increasing template size if the degree of a template polynomial
 is too high.
  
 We propose a technique to make template-based methods more efficient.
 Our technique is based on the following finding: If an algebraic
 invariant exists, then there is a specific algebraic invariant that we
 call a \emph{generalized homogeneous} algebraic invariant that is often
 smaller.  This finding justifies using only a smaller template that
 corresponds to a generalized homogeneous algebraic invariant.

 Concretely, we state our finding above formally based on the abstract
 semantics of an imperative program proposed by Cachera et al.  Then, we
 modify their template-based invariant synthesis so that it generates
 only generalized homogeneous algebraic invariants.  This modification
 is proved to be sound.  Furthermore, we also empirically demonstrate
 the merit of the restriction to generalized homogeneous algebraic
 invariants.  Our implementation outperforms that of Cachera et al. for
 programs that require a higher-degree template.
\end{abstract}

\section{Introduction}
\label{sec:introduction}

We consider the following \emph{postcondition problem}: Given a program \(c\), discover a
fact that holds at the end of \(c\) regardless of the initial state.
This paper focuses on a postcondition written as an \emph{algebraic
condition} \(p_1 = 0 \wedge \dots \wedge p_n = 0\), where
\(p_1,\dots,p_n\) are polynomials over program variables; this problem
is a basis for static verification of functional correctness.

One approach to this problem is \emph{invariant synthesis}, in which we
are to compute a family of predicates \(P_l\) indexed by program
locations \(l\) such that \(P_l\) holds whenever the execution of \(c\)
reaches \(l\).  The invariant associated with the end of \(c\) is a
solution to the postcondition problem.


Because of its importance in static program verification, algebraic
invariant synthesis has been intensively
studied~\cite{DBLP:conf/popl/SankaranarayananSM04,DBLP:journals/ipl/Muller-OlmS04,DBLP:journals/jsc/Rodriguez-CarbonellK07,DBLP:journals/scp/CacheraJJK14}.
Among these proposed techniques, one successful approach is the
constraint-based method in which invariant synthesis is reduced to a
constraint-solving problem.  During constraint generation, this method
designates \emph{templates}, which are polynomials over the program
variables with unknown parameters at the coefficient
positions~\cite{DBLP:conf/popl/SankaranarayananSM04}.  The algorithm
generates constraints that ensure that the templates are invariants and
obtains the invariants by solving the constraints\footnote{The
constraint-based method by Cachera et
al.~\cite{DBLP:journals/scp/CacheraJJK14}, which is the basis of the
current paper, uses a template also for other purposes.  See
Section~\ref{sec:template} for details.}.


\begin{wrapfigure}{r}[0pt]{0.7\linewidth}
 \small
  \fbox{
  \begin{minipage}{\linewidth}
  \begin{center}
  \begin{algorithmic}[1]
    \State{\(x := x_0; v := v_0; t := t_0;\)}
    \While{\(t - a \ne 0\)}
    \State{\((x, v, t) := (x + v \DT, v - g \DT - \rho v \DT, t + \DT);\)}
    \EndWhile\label{line:loopend}
    \label{line:freefalllast}\State{}
  \end{algorithmic}
  \end{center}
  \end{minipage}
  } \caption{Program \(c_\FREEFALL\), which models a falling mass point.
  The symbols in the program represent the following quantities: \(x\)
  is the position of the point, \(v\) is its speed, \(t\) is time,
  \(x_0\) is the initial position, \(v_0\) is the initial speed, \(t_0\)
  is the initial value of the clock \(t\), \(g\) is the acceleration
  rate, \(\rho\) is the friction coefficient, and \(\DT\) is the
  discretization interval.  The simultaneous substitution in the loop
  body numerically updates the values of \(x\), \(v\), and \(t\).  The
  values of \(x\), \(v\), and \(t\) are numerical solutions of the
  differential equations \(\frac{dx}{\DT} = v\) and \(\frac{dt}{\DT} =
  1\); notice that the force applied by the air to the mass point is \(-
  \rho v\), which leads to the differential equation for
  \(\frac{dv}{\DT} = -g - \rho v\).}  \label{fig:freefall}
\end{wrapfigure}

\begin{example}
  \label{ex:freefall} The program \(c_\FREEFALL\) in
  Figure~\ref{fig:freefall} models the behavior of a mass point with
  weight \(1\) and with a constant acceleration rate; the program takes
  friction between the mass point and air into account\footnote{Although
  the guard condition \(t - a \ne 0\) should be \(t - a < 0\) in a
  real-world numerical program, we use the current example for
  presentation purposes.}.  For this program, the postcondition
  \(-gt+gt_0-v+v_0-x\rho+x_0\rho = 0\) holds regardless of the initial
  state.
\end{example}


We describe how a template-based method computes the postcondition in
Example~\ref{ex:freefall}.  The method described here differs from the
one we explore in this paper; this explanation is intended to suggest
the flavor of a template method.

A template-based method generates a \emph{template polynomial}\\ over the
program variables that represent an invariant at
Line~\ref{line:loopend}.  Suppose the generated polynomial
\(p(x_0,v_0,t_0,x,v,t,a,\DT,g,\rho)\) is of degree 2 over the variables:
\(p(x_0,v_0,t_0,x,v,\allowbreak t,a,\DT,g,\rho) := a_{1} + a_{t_0} t_0 +
a_{x_0} x_0 + \dots + a_{g\rho} g \rho\), where \(a_w\) is the
coefficient parameter associated with the power product \(w\).
The procedure then generates constraints such that
\(p(x_0,v_0,t_0,x,v,t,a,\DT,g,\rho) = 0\) is indeed an invariant at
Line~\ref{line:loopend}.
The method proposed by Sankaranarayanan et
al.~\cite{DBLP:conf/popl/SankaranarayananSM04} based on the Gr\"obner
basis~\cite{Cox:2007:IVA:1204670} generates the constraints as an
equations over the parameters; in this case, a solution to the
constraints gives \(-gt+ga-v+v_0-x\rho+x_0\rho = 0\), which is indeed an
invariant at the end of \(c_\FREEFALL\).

One of the drawbacks of the template-based method is excessive growth of
the size of a template.  Blindly generating a template of degree \(d\)
for a degree parameter \(d\) makes the invariant synthesis less scalable
for higher-degree invariants.  For example, the program in
Example~\ref{ex:freefall} has an invariant \(-gt^2 +gt_0^2 - 2tv +
2t_0v_0 + 2x - 2x_0 = 0\) at Line~\ref{line:loopend}.  This invariant
requires a degree-\(3\) template, which has ${10 + 3 \choose 3} = 286$
monomials in this case.

We propose a hack to alleviate this drawback in the template-based
methods.  Our method is inspired by a rule of thumb in physics called
the \emph{principle of quantity dimension}: A physical law should not
add two quantities with different \emph{quantity
dimensions}~\cite{barenblatt1996scaling}.  If we accept this principle,
then, at least for a physically meaningful program such as
\(c_\FREEFALL\), an invariant (and therefore a template) should consist
of monomials with the same quantity dimensions.

Indeed, the polynomial \(-gt+gt_0-v+v_0-x\rho+x_0\rho\) in the invariant
calculated in Example~\ref{ex:freefall} consists only of quantities that
represent velocities.  (Notice that \(\rho\) is a quantity that
corresponds to the inverse of a time quantity.)  The polynomial \(-gt^2
+gt_0^2 - 2tv +2t_0v_0 + 2x - 2x_0\) above consists only of quantities
corresponding to the length.  If we use the notation of quantity
dimensions used in physics, the former polynomial consists only of
monomials with the quantity dimension \(LT^{-1}\), whereas the latter
consists only of \(L\), where \(L\) and \(T\) represent quantity
dimensions for lengths and times, respectively.

By leveraging the quantity dimension principle in the template synthesis
phase, we can reduce the size of a template.  For example, we could use
a template that consists only of monomials for, say, velocity quantities
instead of the general degree-\(2\) polynomial
\(p(x_0,v_0,x,v,t,a,\DT,g,\rho)\) used above, which yields a smaller
template.

The idea of the quantity dimension principle can be nicely captured by
generalizing the notion of \emph{homogeneous polynomials}.  A polynomial is
said to be \emph{homogeneous} if it consists of monomials of the same degree;
for example, the polynomial \(x^3 + x^2 y + x y^2 + y^3\) is a
homogeneous polynomial of degree \(3\).  We generalize this notion of
homogeneity so that (1) a \emph{degree} is an expression corresponding
to a quantity dimension (e.g., \(LT^{-1}\)) and (2) each variable has
its own degree in degree computation.

Let us describe our idea using an example, deferring formal definitions.
Suppose we have the following \emph{degree assignment} for each program
variable: $\Gamma := \enumSet{x_0 \mapsto L, t_0 \mapsto T, g \mapsto
LT^{-2}, t \mapsto T, \DT \mapsto T, x \mapsto L, v \mapsto LT^{-1}, v_0
\mapsto LT^{-1}, \rho \mapsto T^{-1}, a \mapsto T}$.  This degree
assignment intuitively corresponds to the assignment of the quantity
dimension to each variable.  With this degree assignment \(\Gamma\), all
of the monomials in \(-gt+gt_0-v+v_0-x\rho+x_0\rho\) have the same
degree; for example, the monomial \(-gt\) has degree
\(\Gamma(g)\Gamma(t) = (LT^{-2}) T = LT^{-1}\) and monomial \(x\rho\)
has degree \(\Gamma(x)\Gamma(\rho) = L T^{-1}\), and so on.  Hence,
\(-gt+gt_0-v+v_0-x\rho+x_0\rho\) is a homogeneous polynomial in the
generalized sense.  Such a polynomial is called a \emph{generalized
homogeneous (GH) polynomial}.  We call an algebraic invariant with a GH
polynomial a \emph{generalized homogeneous algebraic (GHA) invariant}.

The main result of this paper is a formalization of this idea: If there
is an algebraic invariant of a given program \(c\), then there is a GHA
invariant.  This justifies the use of a template that corresponds to a
GH polynomial in the template method.  We demonstrate this result by
using the abstract semantics of an imperative programming language
proposed by Cachera et al.~\cite{DBLP:journals/scp/CacheraJJK14}.  We
also empirically show that the algorithm by Cachera et al. can be made
more efficient using this idea.

As we saw above, the definition of GH polynomials is parameterized over
a degree assignment \(\Gamma\).  The type inference algorithm for the
\emph{dimension type system} proposed by
Kennedy~\cite{DBLP:conf/esop/Kennedy94,kennedy96:_progr_languag_dimen}
can be used to find an appropriate degree assignment; \(\Gamma\) above
is inferred using this algorithm.  The dimension type system was
originally proposed for detecting a violation of the quantity-dimension
principle in a numerical program.  Our work gives an application of the
dimension type system to invariant synthesis.


Although the method is inspired by the principle of quantity dimensions,
it can be applied to a program that does not model a physical phenomenon
because we abstract the notion of a quantity dimension using that of
generalized homogeneity.  All the programs used in our experiments
(Section~\ref{sec:experiment}) are indeed physically nonsensical
programs.

The rest of this paper is organized as follows.
Section~\ref{sec:preliminary} sets up the basic mathematical definitions
used in this paper; Section~\ref{sec:language} defines the syntax and
semantics of the target language and its abstract semantics;
Section~\ref{sec:ghpoly} defines GH polynomials;
Section~\ref{sec:ghsemantics} defines the revised abstract semantics as
the restriction of the original one to the set of GH polynomials and
shows that the revised semantics is sound and complete;
Section~\ref{sec:template} gives a template-based invariant-synthesis
algorithm and shows its soundness; Section~\ref{sec:experiment} reports
the experimental results; Section~\ref{sec:relatedwork} discusses
related work; and Section~\ref{sec:conclusion} presents the conclusions.
Several proofs are given in \iffull the appendices \else the full
version~\cite{DBLP:journals/corr/KojimaKS16} \fi.

\section{Preliminaries}
\label{sec:preliminary}


\(\REAL\) is the set of real numbers and \(\NAT\) is the set of natural
numbers.  We write \(\CARD{S}\) for the cardinality of \(S\) if \(S\) is
a finite set.  We designate an infinite set of \emph{variables}
\(\VARS\).  \(K\) is a field ranged over by metavariable \(k\); we use
the standard notation for the operations on \(K\).  For \(x_1,\dots,x_n
\in \VARS\), we write \(\POLY[x_1,\dots,x_n]\), ranged over by \(p\) and
\(q\), for the set of polynomials in $x_1,\dots,x_n$ over $\POLY$.



A subset \(I \subseteq \POLY[x_1,\dots,x_n]\) is called an \emph{ideal}
if (1) \(I\) is an additive subgroup and (2) \(pq \in I\) for any \(p
\in I\) and \(q \in \POLY[x_1,\dots,x_n]\).  A set \(S \subseteq
\POLY[x_1,\dots,x_n]\) is said to \emph{generate} the ideal \(I\),
written $I = \Span{S}$,
if \(I\) is the smallest ideal that contains \(S\).

We call an expression of the form \(x_1^{d_1} \dots x_N^{d_N}\), where
\(d_1,\dots,d_N \in \NAT\) and \(x_1,\dots,x_N \in \VARS\), a
\emph{power product} over \(x_1,\dots,x_n\); \(w\) is a metavariable for
power products.  We call \(\sum d_i\) the \emph{degree} of this power
product.
A \emph{monomial} is a term of the form \(k w\); the degree of this
monomial is that of \(w\).
We write \(\PDEG(p)\), the degree of the polynomial \(p\),
for the maximum degree of the monomials in \(p\).


A \emph{state}, ranged over by \(\sigma\), is a finite map from
\(\VARS\) to \(K\).  We write \(\STATES\) for the set of states.  We
use the metavariable \(S\) for a subset of \(\STATES\).  We write
\(\sigma(p)\) for the evaluated value of \(p\) under \(\sigma\).
Concretely, \(\sigma(p) := p(\sigma(x_1),\dots,\sigma(x_n))\).
The set \(\POWERSET(\STATES)\)
constitutes a complete lattice with respect to the set-inclusion
order.

\section{Language}
\label{sec:language}

This section defines the target language, its concrete semantics, and
its abstract semantics.  We essentially follow the development by
Cachera et al.~\cite{DBLP:journals/scp/CacheraJJK14}; we refer the
interested reader to this paper.

The syntax of the target language is as follows:
\[
\footnotesize
\begin{array}{lcl}
  c &::=& \SKIP \mid \ASSIGN{x}{p} \mid \SEQ{c_1}{c_2} \mid \IFTHENELSE{p = 0}{c_1}{c_2} \mid \SWHILE{p = 0}{c} \mid \SWHILE{p \ne 0}{c}\\
\end{array}
\]
where \(p\) is a polynomial over the program variables.  We restrict the
guard to a single-polynomial algebraic condition (i.e., \(p = 0\)) or
its negation.

The semantics of this language is given by the following denotation
function, which is essentially the same as that by Cachera et al.
\[
\footnotesize
\begin{array}{rl}
  \sem{c} &: (\POWERSET(\STATES),\subseteq) \ra (\POWERSET(\STATES),\subseteq)\\
  \sem{\SKIP}(S) &= S\\
  \sem{\ASSIGN{x}{p}}(S) &= \set{\sigma \mid \sigma[x \mapsto \sigma(p)] \in S}\\
  \sem{\SEQ{c_1}{c_2}}(S) &= \sem{c_1}(\sem{c_2}(S))\\
  \sem{\IFTHENELSE{p = 0}{c_1}{c_2}}(S) &= \set{\sigma \in \sem{c_1}(S) \mid \sigma(p) = 0} \cup \set{\sigma \in \sem{c_2}(S) \mid \sigma(p) \ne 0}\\
  \sem{\SWHILE{p \bowtie 0}{c}}(S) &= \nu (\lambda X. \set{\sigma \in S \mid \sigma(p) \not\bowtie 0} \cup \set{\sigma \in \sem{c}(X) \mid \sigma(p) \bowtie 0}),
\end{array}
\]
where \({\bowtie} \in \set{=, \ne}\) and \(\nu F\) is the greatest fixed
point of \(F\).  Intuitively, \(\sigma \in \sem{c}(S)\) means that
executing \(c\) from \(\sigma\) results in a state in \(S\) if the
execution terminates; notice that \(\sigma\) should be in \(\sem{c}(S)\)
if \(c\) does not terminate.  The semantics uses the greatest fixed
point instead of the least fixed point in the \(\WHILE\) statement so
that \(\sem{c}(S)\) contains the states from which the execution of
\(c\) does not terminate.  If we used the least fixed point in the
semantics of a while loop, then only the initial states from which the
program terminates would be in the denotation of the loop.  For example,
consider the following program \(P\) that does not terminate for any
initial state: \(\SWHILE{0=0}{\SKIP}\).  Then, \(\sem{P}(S)\) should be
\(\STATES\).  However, if the denotation of a \(\WHILE\) loop were given
by the least fixed point, then \(\sem{P}(S)\) would be \(\emptyset\).

\begin{example}
 \label{ex:concrete} Recall the program $c_\FREEFALL$ in
 Figure~\ref{fig:freefall}.  Let $p_1$ be
 $-gt+gt_0-v+v_0-x\rho+x_0\rho$, $p_2$ be $-gt^2 +gt_0^2 - 2tv +2t_0v_0
 + 2x - 2x_0$, $p$ be $p_1 + p_2$, and $S$ be $\set{ \sigma \in \STATES
 | \sigma(p) = 0}$.  We show that \(\sem{c_\FREEFALL}(S) = \STATES\).
 We write \(c_1\) for \((x,v,t) := (x_0,v_0,t_0)\), and \(c_2\) for
 \((x,v,t) := (x+v\DT,v-g\DT-\rho v \DT,t+\DT)\).  We have
 \(\sem{c_\FREEFALL}(S) = \sem{c_1}(\sem{\SWHILE{t - a \ne 0}{c_2}}(S))
 = \sem{c_1}(\nu F)\) where $F(X) = \set{\sigma \in S | \sigma(t-a) = 0}
 \cup \set{\sigma \in \sem{c_2}(X) | \sigma(t-a) \ne 0}$.  It is easy to
 check that $\sem{c_1}(S) = \STATES$, so it suffices to show that $\nu F
 \supseteq S$.  This holds because \(S\) is a fixed point of \(F\).
 Indeed, $F(S) = \set{\sigma \in S | \sigma(t-a) = 0} \cup \set{\sigma
 \in \sem{c_2}(S) | \sigma(t-a) \ne 0} = \set{\sigma \in S | \sigma(t-a)
 = 0} \cup \set{\sigma \in S | \sigma(t-a) \ne 0} = S$ as desired.  Note
 that $\sem{c_2}(S) = S$ because $c_2$ does not change the value of $p$.
\end{example}


The abstract semantics is essentially the same as that given by Cachera
et al.~\cite{DBLP:journals/scp/CacheraJJK14} with a small adjustment.
The preorder
\(\vorder \subseteq \POWERSET(\POLY[x_1,\dots,x_n]) \times
\POWERSET(\POLY[x_1,\dots,x_n])\) is defined by \(S_1 \vorder S_2
:\iff S_2 \subseteq S_1\)\footnote{The original abstract semantics of
  Cachera et al.~\cite{DBLP:journals/scp/CacheraJJK14} is defined as a
  transformer on \emph{ideals} of polynomials; however, we formulate
  it here so that it operates on \emph{sets} of polynomials because
  their invariant-synthesis algorithm depends on the choice of a
  generator of an ideal.}.  Then $\POWERSET(\POLY[x_1,\dots,x_n])$ is
a complete lattice, and the meet is given as the set unions: Given $H
\in \POWERSET(\POLY[x_1,\dots,x_n])$ and $U \subseteq
\POWERSET(\POLY[x_1,\dots,x_n])$, $H \vorder G$ for all $G \in U$ if
and only if $H \vorder \bigcup U$.

The abstraction \(\alpha(S)\) is defined by \(\set{p \in
  \POLY[x_1,\dots,x_n] | \forall \sigma \in S, \sigma(p) = 0}\), the
  polynomials evaluated to \(0\) under all the states of \(S\).  The
  concretization \(\gamma(G)\) is defined by \(\set{\sigma \in \STATES |
  \forall p \in G, \sigma(p) = 0}\), the states that evaluate all the
  polynomials in \(G\) to \(0\).  The pair of \(\alpha\) and \(\gamma\)
  constitutes a Galois connection; indeed, both $\alpha(S) \vorder G$
  and $S \subseteq \gamma(G)$ are by definition equivalent to the
  following: $\forall p \in G, \forall \sigma \in S, \sigma(p) = 0$.
For example, the set of a state \(\set{\set{x_1 \mapsto 1, x_2 \mapsto
    0}}\) is abstracted by the set \(\set{(x_1 - 1)p_1 + x_2p_2 \mid
    p_1,p_2 \in \POLY[x_1,\dots,x_n]}\); this set is equivalently
    \(\Span{x_1-1,x_2}\), the ideal generated by \(x_1 - 1\) and
    \(x_2\).

The definition of the abstract semantics is parameterized over a
remainder-like operation \(\REM(f, p)\) that satisfies \(\REM(f,p) = f -
q p\) for some \(q\); we allow any $\REM$ that satisfies this condition
to be used.  Note that this differs from the standard remainder
operation where we require $\LM_{\preceq}(p)$ --- the greatest monomial
in $p$ with respect to a monomial order $\preceq$ --- not to divide any
monomial in $\LM_{\preceq}(\REM(f,p))$.  We write $\REM(G, p)$, where
\(G\) is a set of polynomials, for the set $\set{\REM(f,p) | f \in G
\setdiff \set{0}}$.


The abstract semantics \(\WPA{c}{\REM}\) is defined as follows.
\[
\footnotesize
\begin{array}{rl}
  \WPA{c}{\REM} &: (\POWERSET(\POLY[x_1,\dots,x_n]),\vorder) \ra (\POWERSET(\POLY[x_1,\dots,x_n]),\vorder)\\
  \WPA{\SKIP}{\REM}(G) & = G\\
  \WPA{\ASSIGN{x}{p}}{\REM}(G) & = G[x:=p] \\
  \WPA{\SEQ{c_1}{c_2}}{\REM}(G) &= \WPA{c_1}{\REM}(\WPA{c_2}{\REM}(G)) \\
  \WPA{\IFTHENELSE{p=0}{c_1}{c_2}}{\REM}(G)
                  &= p \cdot \WPA{c_2}{\REM}(G) \cup \REM(\WPA{c_1}{\REM}(G), p) \\
  \WPA{\SWHILE{p \ne 0}{c}}{\REM}(G)
                  &= \nu(\lambda H. p \cdot \WPA{c}{\REM}(H)
                    \cup \REM(G, p))\\
  \WPA{\SWHILE{p = 0}{c}}{\REM}(G)
                  &= \nu(\lambda H. p \cdot G
                    \cup \REM(\WPA{c}{\REM}(H), p)).
\end{array}
\]
In this definition,
$G[x:=p] = \set{q[x:=p] | q \in G}$ and $q[x:=p]$ is the polynomial
obtained by replacing $x$ with $p$ in $q$.  $\nu F$ exists for an arbitrary monotone
$F$ because we are working in the complete lattice
$\POWERSET(\POLY[x_1,\dots,x_n])$; concretely, we have $\nu F =
\bigcup \set{ G | G \vorder F(G)}$.

\(\WPA{c}{\REM}\) transfers backward a set of polynomials whose values
are \(0\).  Cachera et
al.~\cite[Theorem~3]{DBLP:journals/scp/CacheraJJK14} showed the
soundness of this abstract semantics: For any program \(c\) and a set of
polynomials \(G\), we have \(\gamma(\WPA{c}{\REM}(G)) \subseteq
\sem{c}(\gamma(G))\).  Although our abstract values are sets rather than
ideals, we can prove this theorem in the same way (i.e., induction on
the structure of $c$) as the original proof.

The highlight of the abstract semantics is the definition of
$\WPA{\IFTHENELSE{p=0}{c_1}{c_2}}{\REM}$.  In order to explain this
case, let us describe a part of the soundness proof: We show
$\gamma(\WPA{\IFTHENELSE{p=0}{c_1}{c_2}}{\REM}(G)) \subseteq
\sem{\IFTHENELSE{p=0}{c_1}{c_2}}(\gamma(G))$ assuming
$\gamma(\WPA{c_1}{\REM}(G)) \subseteq \sem{c_1}(\gamma(G))$ and
$\gamma(\WPA{c_2}{\REM}(G)) \subseteq \sem{c_2}(\gamma(G))$.  Suppose
$\sigma \in \gamma(\WPA{\IFTHENELSE{p=0}{c_1}{c_2}}{\REM}(G))$.  
Our goal is to show $\sigma \in
\sem{\IFTHENELSE{p=0}{c_1}{c_2}}(\gamma(G))$.
Therefore,
it suffices to show that (1) $\sigma(p) = 0$ implies $\sigma \in
\sem{c_1}(\gamma(G))$, and (2) $\sigma(p) \ne 0$ implies $\sigma \in
\sem{c_2}(\gamma(G))$.

\begin{itemize}
\item We first show that if $\sigma(p) = 0$ then
  $\sigma \in \sem{c_1}(\gamma(G))$.  By the induction hypothesis, we
      have
  $\gamma(\WPA{c_1}\REM(G)) \subseteq \sem{c_1}(\gamma(G))$, so it
  suffices to show that $\sigma \in \gamma(\WPA{c_1}\REM(G))$.  Take
  $f \in \WPA{c_1}{\REM}(G)$.  Then there exists
  $r \in \REM(\WPA{c_1}\REM(G),p)$ and $q \in \POLY[x_1,\dots,x_n]$ such
  that $f = qp + r$.  Because $\sigma(p)=0$ and $r \in \REM(\WPA{c_1}\REM(G),p) \subseteq \WPA{\IFTHENELSE{p=0}{c_1}{c_2}}{\REM}(G)$ and $\sigma \in \WPA{\IFTHENELSE{p=0}{c_1}{c_2}}{\REM}(G)$, we have
  $\sigma(f) = \sigma(q) \sigma(p) + \sigma(r) = 0$.  Since $f$ is an
  arbitrary element of $\WPA{c_1}{\REM}(G)$, by definition of $\gamma$
  we conclude that $\sigma \in \gamma(\WPA{c_1}\REM(G))$.
\item Next we show that $\sigma(p) \ne 0$ implies
  $\sigma \in \sem{c_2}(\gamma(G))$.
  By the induction hypothesis, we have $\gamma(\WPA{c_2}{\REM}(G))
      \subseteq
  \sem{c_2}(\gamma(G))$, so it suffices to show that $\sigma \in
  \gamma(\WPA{c_2}{\REM}(G))$.
  Take $f \in \WPA{c_2}{\REM}(G)$.  Then $pf \in p \cdot
  \WPA{c_2}{\REM}(G) \subseteq \WPA{\IFTHENELSE{p=0}{c_1}{c_2}}{\REM}(G)$, thus $\sigma(pf) = 0$.
  From the assumption $\sigma(p) \ne 0$, this implies $\sigma(f) = 0$.
  Since $f$ is arbitrary, we conclude that $\sigma
  \in \gamma(\WPA{c_2}\REM(G))$\footnote{The soundness would still hold
      even if we defined $\WPA{\IFTHENELSE{p=0}{c_1}{c_2}}{\REM}(G)$ by
      $\WPA{c_2}{\REM}(G) \cup \REM(\WPA{c_1}{\REM}(G), p)$ instead of
      $p \cdot \WPA{c_2}{\REM}(G) \cup \REM(\WPA{c_1}{\REM}(G), p)$.
      The multiplier $p$ makes the abstract semantics more precise.}.
\end{itemize}


The abstract semantics is related to the postcondition problem as
follows:
\begin{theorem}
  If \(\WPA{c}{\REM}(G) = \set{0}\), then \(\sem{c}(\gamma(G))
  = \STATES\) (hence \(g = 0\) is a solution of the postcondition
  problem for any \(g \in G\)).
  \label{thm:soundness-cachera-et-al}
\end{theorem}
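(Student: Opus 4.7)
The plan is to derive this statement as a direct corollary of the soundness theorem of Cachera et al.\ quoted just above the statement, namely $\gamma(\WPA{c}{\REM}(G)) \subseteq \sem{c}(\gamma(G))$. So I would first instantiate that inclusion with the hypothesis $\WPA{c}{\REM}(G) = \set{0}$, yielding $\gamma(\set{0}) \subseteq \sem{c}(\gamma(G))$.

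Next I would compute $\gamma(\set{0})$ explicitly from its definition. Unfolding, $\gamma(\set{0}) = \set{\sigma \in \STATES \mid \forall p \in \set{0},\, \sigma(p) = 0} = \set{\sigma \in \STATES \mid \sigma(0) = 0} = \STATES$, since the zero polynomial evaluates to $0$ under every state. Combining with the previous step gives $\STATES \subseteq \sem{c}(\gamma(G))$, and since $\sem{c}(\gamma(G)) \subseteq \STATES$ is automatic from the typing of $\sem{c}$, we conclude $\sem{c}(\gamma(G)) = \STATES$.

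Finally, to obtain the parenthetical remark, I would observe that for any $g \in G$ and any initial state $\sigma$, the statement $\sem{c}(\gamma(G)) = \STATES$ means that every terminating execution of $c$ ends in a state in $\gamma(G)$, which by definition of $\gamma$ evaluates every element of $G$ to $0$; thus $g = 0$ is a valid postcondition.

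There is no real obstacle here: the argument is a one-line corollary, and the only thing worth spelling out carefully is the evaluation $\gamma(\set{0}) = \STATES$, which might otherwise be mistaken for $\emptyset$ if one confuses the abstract value $\set{0}$ (containing the zero polynomial) with the empty set of polynomials. It may be worth noting in passing that $\set{0}$ is indeed the top element of $(\POWERSET(\POLY[x_1,\dots,x_n]),\vorder)$ modulo the preorder, consistent with the intuition that $\WPA{c}{\REM}(G) = \set{0}$ represents the least informative abstract state.
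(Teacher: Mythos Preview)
Your argument is correct and is exactly the paper's proof: apply the soundness inclusion $\gamma(\WPA{c}{\REM}(G)) \subseteq \sem{c}(\gamma(G))$, compute $\gamma(\set{0}) = \STATES$, and conclude by maximality of $\STATES$. One small slip in your closing aside: the top element of $(\POWERSET(\POLY[x_1,\dots,x_n]),\vorder)$ is $\emptyset$, not $\set{0}$ (though both concretize to $\STATES$, which is what matters here).
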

\begin{proof}
  From the soundness above,
  \(\gamma(\WPA{c}{\REM}(G)) = \gamma(\set{0}) = \STATES \subseteq
  \sem{c}(\gamma(G))\); therefore \(\sem{c}(\gamma(G)) = \STATES\)
  follows because \(\STATES\) is the top element in the concrete
  domain.
\end{proof}

\begin{example}
 \label{ex:abstract} We exemplify how the abstract semantics works using
 the program \(c_\FREEFALL\) in Figure~\ref{fig:freefall}.  Set \(p\),
 \(c_1\), and \(c_2\) as in Example~\ref{ex:concrete}.  Define \(\REM\)
 in this example by \(\REM(f, p) = f\).  First, let $F(H) :=
 (t-a)\WPA{c_2}{\REM}(H) \cup \set{p}$, \(g_0 := 1\), \(g_{n+1} :=
 (t-a)(t+\DT-a)\dots(t+n\DT-a)\), and $G = \set{ g_np | n \in \NAT}$.
 Then $\nu F = G$.  Indeed, by definition of $\vorder$, we have $\top =
 \emptyset$, and it is easy to check that $F^n(\top) = \set{ g_kp | 0
 \le k < n}$.  Therefore we have $\nu F \vorder G$, because \(G\) is the
 greatest lower bound of \((F^n(\top))_{n \in \NAT}\).  By simple
 computation, we can see that $G$ is a fixed point of $F$, so we also
 have $G \vorder \nu F$; hence, $\nu F = G$.  Therefore,
 \(\WPA{c_\FREEFALL}{\REM}(\set{p}) = \set{0}\):
 \(\WPA{c_\FREEFALL}{\REM}(\set{p}) = \WPA{c_1}{\REM}(\WPA{\SWHILE{t - a
 \ne 0}{c_2}}{\REM}(\set{p})) = \WPA{c_1}{\REM}(\nu(\lambda
 H. (t-a)\WPA{c_2}{\REM}(H) \cup \REM(\set{p},t-a))) =
 \WPA{c_1}{\REM}(\nu(\lambda H. (t-a)\WPA{c_2}{\REM}(H) \cup \set{p})) =
 \WPA{c_1}{\REM}(\set{ g_np | n \in \NAT}) = \set{ (g_np)[x := x_0, v :=
 v_0, t := t_0] | n \in \NAT} = \set{0}\).
\end{example}

By Theorem~\ref{thm:soundness-cachera-et-al}, a set of polynomials $G$
such that $\WPA{c}{\REM}(G) = \set{0}$ for \emph{some} $\REM$
constitutes a solution of the postcondition problem.  The choice of
$\REM$ indeed matters in solving the postcondition problem: There are
$c$ and $G$ such that $\WPA{c}{\REM}(G) = \set{0}$ holds for some $\REM$
but not for others.  The reader is referred to~\cite[Section
4.1]{DBLP:journals/scp/CacheraJJK14} for a concrete example.



\section{Generalized homogeneous polynomials}
\label{sec:ghpoly}

\subsection{Definition}
\label{sec:ghpoly-def}

A polynomial \(p\) is said to be a homogeneous polynomial of degree
\(d\) if the degree of every monomial in \(p\) is
\(d\)~\cite{Cox:2007:IVA:1204670}.  As we mentioned in
Section~\ref{sec:introduction}, we generalize this notion of
homogeneity.

We first generalize the notion of the degree of a polynomial.
\begin{definition}
  \label{def:gdeg} The group of \emph{generalized degrees (g-degrees)}
  \(\GDEGS_B\), ranged over by \(\tau\), is an Abelian group freely
  generated by the finite set \(B\); that is, \(\GDEGS_B :=
  \set{b_1^{n_1} \dots b_m^{n_m} \mid b_1,\dots,b_m \in B, n_1,\dots,n_m
  \in \INT}\).  We call \(B\) the set of the \emph{base degrees}.  We
  often omit \(B\) in \(\GDEGS_B\) if the set of the base degrees is
  clear from the context.
\end{definition}
For example, if we set \(B\) to \(\set{L,T}\), then \(L, T\), and
\(LT^{-1}\) are all generalized degrees.  By definition, \(\GDEGS_B\)
has the multiplication on these g-degrees (e.g., \((LT) \cdot (LT^{-2})
= L^2T^{-1}\) and \((LT^2)^2 = L^2 T^4\)).

\begin{wrapfigure}{l}[0pt]{0.5\linewidth}
\begin{minipage}{\linewidth}
  \infax[T-Skip]{\Gamma \p \SKIP}
\end{minipage}\\
\vspace{1mm}
  \begin{minipage}{\linewidth}
    \infrule[T-Seq]{\Gamma \p c_1 \andalso \Gamma \p c_2}
            {\Gamma \p \SEQ{c_1}{c_2}}
  \end{minipage}\\
  \vspace{1mm}
\begin{minipage}{\linewidth}
  \infrule[T-Assign]{\Gamma(x) = \DDEG_\Gamma(p)}
          {\Gamma \p \ASSIGN{x}{p}}
\end{minipage}\\
\vspace{1mm}
  \begin{minipage}{\linewidth}
    \infrule[T-If]{\DDEG_\Gamma(p) = \tau \andalso \Gamma \p c_1 \andalso \Gamma \p c_2}
            {\Gamma \p \IFTHENELSE{p = 0}{c_1}{c_2}}
  \end{minipage}\\
  \vspace{1mm}
  \begin{minipage}{\linewidth}  
    \infrule[T-While]{\DDEG_\Gamma(p) = \tau \andalso \Gamma \p c}
            {\Gamma \p \SWHILE{p \bowtie 0}{c}}
  \end{minipage}
  \vspace{1mm}
  \caption{Typing rules}
  \label{fig:typingrules}
\end{wrapfigure}

In the analogy of quantity dimensions, the set \(B\) corresponds to
the base quantity dimensions (e.g., \(L\) for lengths and \(T\) for
times); the set \(\GDEGS_B\) corresponds to the derived quantity
dimensions (e.g., \(LT^{-1}\) for velocities and \(LT^{-2}\) for
acceleration rates.); multiplication expresses the relationship among
quantity dimensions (e.g., \(LT^{-1} \cdot T = L\) for
\(\mbox{velocity} \times \mbox{time} = \mbox{distance}\).)

\begin{definition}
  A \emph{g-degree assignment} is a finite mapping from \(\VARS\) to
  \(\GDEGS\).  A metavariable \(\Gamma\) ranges over the set of
  g-degree assignments.  For a power product \(w := x_1^{d_1} \dots
  x_n^{d_n}\), we write \(\DDEG_\Gamma(w)\) for \(\Gamma(x_1)^{d_1}
  \dots \Gamma(x_n)^{d_n}\) and call it the \emph{g-degree of \(w\)
    under \(\Gamma\)} (or simply \emph{g-degree} of \(w\) if
  \(\Gamma\) is not important); \(\DDEG_\Gamma(k w)\), the g-degree of
  a monomial \(k w\) under \(\Gamma\), is defined by
  \(\DDEG_\Gamma(w)\).
\end{definition}

For example, set \(\Gamma\) to \(\set{t \mapsto T, v \mapsto
  LT^{-1}}\); then \(\DDEG_\Gamma(2vt) = L\).  In terms of the analogy
with quantity dimensions, this means that the expression \(2vt\)
represents a length.

\begin{definition}
 We say \(p\) is a \emph{generalized homogeneous (GH) polynomial of
 g-degree \(\tau\) under \(\Gamma\)} if every monomial in \(p\) has the
 g-degree \(\tau\) under \(\Gamma\).  We write \(\DDEG_\Gamma(p)\) for
 the g-degree of \(p\) if it is a GH polynomial under \(\Gamma\); if it
 is not, then \(\DDEG_\Gamma(p)\) is not defined.  We write
 \(\POLY[x_1,\dots,x_n]_{\Gamma,\tau}\) for the set of the GH
 polynomials with g-degree \(\tau\) under \(\Gamma\).  We write
 \(\POLY[x_1,\dots,x_n]_{\Gamma}\) for \(\bigcup_{\tau \in \GDEGS}
 \POLY[x_1,\dots,x_n]_{\Gamma,\tau}\).
\end{definition}

\begin{example}
  \label{ex:gdegree}
  The polynomial \(-gt^2 +gt_0^2 - 2tv +2t_0v_0 + 2x - 2x_0\) (the
  polynomial \(p_2\) in Example~\ref{ex:concrete}) is a GH-polynomial
  under
  \[\Gamma := \set{g \mapsto LT^{-2}, t \mapsto T, v \mapsto
    LT^{-1}, x \mapsto L, x_0 \mapsto L, v_0 \mapsto LT^{-1}, \rho
    \mapsto T^{-1}, a \mapsto T}\] because all the monomials in
  \(p_2\) have the same g-degree in common; for example,
  \(\DDEG_{\Gamma}(-gt^2) = \Gamma(g)\Gamma(t)^2 = (LT^{-2})T^2 = L\);
  \(\DDEG_{\Gamma}(-2tv) = \Gamma(t)\Gamma(v) = T(LT^{-1}) = L\);
  \(\DDEG_{\Gamma}(2x) = \Gamma(x) = L\); and \(\DDEG_{\Gamma}(-2x_0)
  = \Gamma(x_0)= L\).  Therefore, \(\DDEG_{\Gamma}(p_2) = L\).  We
  also have \(\DDEG_{\Gamma}(p_1) = LT^{-1}\).
\end{example}


It is easy to see that any $p \in \POLY[x_1,\dots,x_n]$ can be uniquely
written as the finite sum of GH polynomials as \(p_{\Gamma,\tau_1} +
\dots + p_{\Gamma,\tau_m}\), where \(p_{\Gamma,\tau_i}\) is the summand
of g-degree \(\tau_i\) under \(\Gamma\) in this representation.  For
example, the polynomial \(p\) in Example~\ref{ex:concrete}, can be
written as \(p_{L} + p_{LT^{-1}}\) where \(p_{L} = p_1\) and
\(p_{LT^{-1}} = p_2\) from the previous example.  We call
\(p_{\Gamma,\tau}\) the \emph{homogeneous component of \(p\) with
g-degree \(\tau\) under \(\Gamma\)}, or simply \emph{a homogeneous
component of $p$}; we often omit \(\Gamma\) part if it is clear from the
context.



The definitions above are parameterized over a g-degree assignment
\(\Gamma\).  It is determined from the usage of variables in a given
program, which is captured by the following type judgment.

\begin{definition}
The judgment \(\Gamma \p c\) is the smallest relation that satisfies
the rules in Figure~\ref{fig:typingrules}.  We say \(\Gamma\) is
\emph{consistent} with the program \(c\) if \(\Gamma \p c\) holds.
\end{definition}
The consistency relation above is an adaptation of the \emph{dimension
  type system} proposed by
  Kennedy~\cite{DBLP:conf/esop/Kennedy94,kennedy96:_progr_languag_dimen}
  to our imperative language.  A g-degree assignment \(\Gamma\) such
  that \(\Gamma \p c\) holds makes every polynomial in \(c\) a GH one.
  In the rule \rn{T-Assign}, we require the polynomial \(p\) to have the
  same g-degree as that of \(x\) in \(\Gamma\).

\subsection{Automated inference of the g-degree assignment}
\label{sec:inference}

Kennedy also proposed a constraint-based automated type inference
algorithm of his type
system~\cite{DBLP:conf/esop/Kennedy94,kennedy96:_progr_languag_dimen}.
We adapt his algorithm so that, given a command $c$, it infers a
g-degree assignment $\Gamma$ such that $\Gamma \p c$.  The algorithm is
in three steps: (1) designating a template of the g-degree assignment,
(2) generating constraints over g-degrees, and (3) solving the
constraints.  In order to make the current paper self-contained, we
explain each step below.

\paragraph{Step 1: Designating a template of the g-degree assignment}

Let $S_c := \set{x_1,\dots,x_n}$ be the set of the variables occurring
in the given program $c$.  Then, the algorithm first designates a
template g-degree assignment $\Gamma_c := \set{x_1 \mapsto \alpha_{x_1},
\dots, x_n \mapsto \alpha_{x_n}}$ where
$\alpha_{x_1},\dots,\alpha_{x_n}$ are fresh unknowns taken from the set
$\GDEGV$ for the g-degrees of $x_1,\dots,x_n$.  For example, given the
program $c_\FREEFALL$ in Figure~\ref{fig:freefall}, the algorithm
designates
$$
\Gamma_{c_\FREEFALL} :=
\left\lbrace
  \begin{array}{l}
    g \mapsto \alpha_g, t \mapsto
    \alpha_t, \DT \mapsto \alpha_\DT, v \mapsto \alpha_v, x \mapsto
    \alpha_x, \\
    x_0 \mapsto \alpha_{x_0}, v_0 \mapsto \alpha_{v_0}, \rho
    \mapsto \alpha_{\rho}, a \mapsto \alpha_a
  \end{array}
\right\rbrace
$$
where $\alpha_g, \alpha_t,
\alpha_\DT, \alpha_v, \alpha_x, \alpha_{x_0}, \alpha_{v_0},
\alpha_{\rho}, \alpha_a$ are distinct unknowns for the g-degrees of the
variables that are to be inferred.

\paragraph{Step 2: Generating constraints over g-degrees}

The algorithm then generates the constraints over the g-degrees.  We
first define the set of constraints.  Let $\GDEGS'$ be
$\GDEGS_{\set{\alpha_1,\dots,\alpha_n}}$ in the rest of this section,
where $\alpha_1,\dots,\alpha_n$ are the unknowns generated in the
previous step.  (Recall that $\GDEGS_S$ is the set of g-degrees
generated by $S$.  Therefore, $\GDEGS'$ is the set of products of the
form $\alpha_1^{k_1} \dots \alpha_n^{k_n}$ for $k_1,\dots,k_n \in
\INT$.)  The $\Gamma_c$ generated in the previous step can be seen as a
map from $\VARS$ to $\GDEGS'$.

A \emph{g-degree constraint} is an equation $\tau_1 = \tau_2$ where
$\tau_1,\tau_2 \in \GDEGS'$.  We use a metavariable $\sigma$ for maps
from $\set{\alpha_1,\dots,\alpha_n}$ to $\GDEGS_B$.  This map can be
naturally extended to take the elements of $\GDEGS'$.  We say that
$\sigma$ is a \emph{solution} of a constraint set $C$ if it satisfies
all the equations in $C$.  For example, the map $\sigma := \set{\alpha_v
\mapsto LT^{-1}, \alpha_x \mapsto L, \alpha_t \mapsto T}$ is a solution
of the constraint set $\set{\alpha_v = \alpha_x\alpha_t^{-1}}$ since
$\sigma(\alpha_v) = LT^{-1} = \sigma(\alpha_x\alpha_t^{-1}) =
\sigma(\alpha_x)\sigma(\alpha_t)^{-1}$.


For a polynomial $p := a_1 w_1 + \dots + a_n w_n$, we write
$\DDEG'_{\Gamma}(p)$ for the pair $(\DDEG_{\Gamma}(w_1), C)$ where $C$
is $\emptyset$ if $n = 1$ and $\enumSet{\DDEG_{\Gamma}(w_1) =
\DDEG_{\Gamma}(w_2), \dots, \DDEG_{\Gamma}(w_{n-1}) =
\DDEG_{\Gamma}(w_n)}$ otherwise.  The intuition of $\DDEG'_{\Gamma}(p) =
(\tau, C)$ is that, for any solution $\sigma$ of $C$, the polynomial $p$
is generalized homogeneous and its g-degree is $\sigma(\tau)$.

For example, let $\Gamma$ be $\set{v \mapsto \alpha_v, g \mapsto
\alpha_g, x \mapsto \alpha_x}$ and $p$ be $2 v^2 + g x$; then,
$\DDEG'_{\Gamma}(p)$ is the pair $(\alpha_v^2, C)$ where $C$ is
$\set{\alpha_v^2 = \alpha_g \alpha_x}$.  For a solution $\sigma :=
\set{\alpha_v \mapsto LT^{-1}, \alpha_g \mapsto LT^{-2}, \alpha_x
\mapsto L}$ of $C$, $\sigma(\Gamma) = \enumSet{v \mapsto LT^{-1}, g \mapsto
LT^{-2}, x \mapsto L}$.  The polynomial $p$ is generalized homogeneous
under $\sigma(\Gamma)$ since $\sigma(\Gamma)(v^2) = \sigma(\Gamma)(g x)
= L^2T^{-2}$.  This is equal to $\sigma(\alpha_v^2)$.

The function $\PT$ for the constraint generation is defined as follows:
$$
\begin{array}{rcl}
 \PT(\Gamma,\SKIP)&:=&\emptyset\\
 \PT(\Gamma,\SEQ{c_1}{c_2})&:=&\PT(\Gamma,c_1) \cup \PT(\Gamma,c_2)\\
 \PT(\Gamma,\ASSIGN{x}{p})&:=& \set{\Gamma(x) = \tau} \cup C\\
 && \mbox{ where } (\tau,C) := \DDEG'_\Gamma(p)\\
 \PT(\Gamma,\IFTHENELSE{p = 0}{c_1}{c_2})&:=& C \cup \PT(\Gamma,c_1) \cup \PT(\Gamma,c_2)\\
 && \mbox{ where } (\tau,C) := \DDEG'_\Gamma(p)\\
 \PT(\Gamma,\SWHILE{p \bowtie 0}{c})&:=& \PT(\Gamma,c)\\
 && \mbox{ where } (\tau,C) := \DDEG'_\Gamma(p).\\
\end{array}
$$ 

The constraints $\PT(\Gamma,c)$ is defined so that its any solution
$\sigma$ satisfies $\sigma(\Gamma) \p c$.  The definition essentially
constructs the derivation tree of $\Gamma \p c$ following the rules in
Figure~\ref{fig:typingrules} and collects the constraints appearing in
the tree.

\begin{example}
 \label{ex:gDegreeConstraint} $\PT(\Gamma_{c_\FREEFALL},c_\FREEFALL)$
 generates the following constraints.  From the commands in Line 1, the
constraint set $\set{\alpha_x=\alpha_{x_0}, \alpha_v=\alpha_{v_0},
\alpha_t=\alpha_{t_0}}$ is generated; from the guard in Line 2,
$\set{\alpha_t = \alpha_a}$ is generated; from the right-hand side of
Line 3, the constraint set $\set{\alpha_x = \alpha_v \alpha_\DT,
\alpha_v = \alpha_g \alpha_\DT, \alpha_g\alpha_\DT =
\alpha_\rho\alpha_v\alpha_\DT, \alpha_t = \alpha_\DT}$, which ensures
the generalized homogeneity of each polynomial, is generated; $\PT$ also
generates $\set{\alpha_x = \alpha_x, \alpha_v = \alpha_v, \alpha_t =
\alpha_t}$, which ensures that the g-degrees of the left-hand side and
the right-hand side are identical.
\end{example}


\paragraph{Step 3: Solving the constraints}

The algorithm then calculates a solution of the generated constraints.
The constraint-solving procedure is almost the same as that by
Kennedy~\cite[Section 5.2]{DBLP:conf/esop/Kennedy94}, which is based on
Lankford's unification algorithm\footnote{We do not discuss the
termination of the procedure in this paper.  See Kennedy~\cite[Section
5.2]{DBLP:conf/esop/Kennedy94}.}~\cite{lankford1984abelian}.

The procedure obtains a solution $\sigma$ from the given constraint set
$C$ by applying the following rewriting rules successively: $$
\begin{array}{rcl}
 (\emptyset, \sigma) &\ra& \sigma\\
 (\set{\alpha'^{k}\vec{\alpha}^{\vec{n}} = 1} \cup C, \sigma) &\ra& (\set{\alpha' \mapsto \vec{\alpha}^{-\frac{\vec{n}}{k}}}(C), \set{\alpha' \mapsto \vec{\alpha}^{-\frac{\vec{n}}{k}}} \circ \sigma)\\
 && \mbox{where the absolute value of $k$ is not more than those of $\vec{n}$}\\
 && \mbox{(if $k$ divides all the integers in $\vec{n}$)}\\
 (\set{\alpha'^{k}\vec{\alpha}^{\vec{n}} = 1} \cup C, \sigma) &\ra& (\set{\omega^k \vec{\alpha}^{\vec{n} \MOD k} = 1} \cup \sigma'(C), \sigma' \circ \sigma)\\
 && \mbox{where the absolute value of $k$ is not more than those of $\vec{n}$,}\\
 && \mbox{$\sigma' = \set{\alpha' \mapsto \omega \vec{\alpha}^{-\floor{\frac{\vec{n}}{k}}}}$,}\\
 && \mbox{and $\omega$ is a fresh element of $\GDEGV$}\\
 && \mbox{(if there is an integer in $\vec{n}$ that is not divisible by $k$)}\\
 (\set{1 = 1} \cup C, \sigma) &\ra& (C, \sigma)\\
 (\set{\tau_1 = \tau_2} \cup C, \sigma) &\ra& (\set{\tau_1\tau_2^{-1} = 1} \cup C, \sigma)\\
 C &\ra& (C, \emptyset).\\
\end{array}
$$ 

The idea of the procedure is to construct a solution iteratively
converting a constraint $\alpha'^{k} \vec{\alpha}^{\vec{n}} = 1$ to
$\set{\alpha' \mapsto \vec{\alpha}^{-\frac{\vec{n}}{k}}}$ if $k$ divides
all the integers in $\vec{n}$ (i.e., the second case).  If $k$ does not
(i.e., the third case)\footnote{We do not use this case in the rest of
this paper.}, the procedure (1) splits $\frac{\vec{n}}{k}$ to the
quotient $\floor{\frac{\vec{n}}{k}}$ and the remainder $\vec{n} \MOD k$,
(2) generates a fresh g-degree variable $\omega$ representing
$\vec{\alpha}^{-\frac{\vec{n} \MOD k}{k}}$, and (3) sets $\alpha'$ in
the solution to $\omega \vec{\alpha}^{-\floor{\frac{\vec{n}}{k}}}$ which
is equal to $\vec{\alpha}^{-\frac{\vec{n}}{k}}$.

After obtaining a solution with the procedure above, the inference
algorithm assigns different base degree to each surviving g-degree
variable.




\begin{example}
 Consider the following constraint set $C$:
 $$
 \begin{array}{l}
  \left\{
   \begin{array}{l}
    \alpha_x\alpha_{x_0}^{-1}=1, 
     \alpha_v\alpha_{v_0}^{-1}=1,
     \alpha_t\alpha_{t_0}^{-1}=1,
     \alpha_t\alpha_\DT^{-1} = 1,\\
     \alpha_x\alpha_v^{-1}\alpha_\DT^{-1}=1, 
     \alpha_v \alpha_g^{-1} \alpha_\DT^{-1} = 1,
    \alpha_g\alpha_\DT \alpha_\rho^{-1} \alpha_v^{-1} \alpha_\DT^{-1} = 1
   \end{array}
  \right\}
 \end{array}
 $$ which is equivalent to that of Example~\ref{ex:gDegreeConstraint}.
 After several steps of rewriting, the procedure obtains
 $$
 \left(
 \left\{
 \begin{array}{l}
   \alpha_{v_0}\alpha_{g}^{-1}\alpha_{\DT}^{-1} = 1,\\
   \alpha_g \alpha_\rho^{-1} \alpha_{v_0}^{-1} = 1
 \end{array}
 \right\},
 \left\{
 \begin{array}{l}
  \alpha_x \mapsto \alpha_{v_0} \alpha_{\DT},
   \alpha_v \mapsto \alpha_{v_0},\\
   \alpha_t \mapsto \alpha_{\DT},
   \alpha_{t_0} \mapsto \alpha_{\DT},\\
   \alpha_{x_0} \mapsto \alpha_{v_0} \alpha_{\DT}
 \end{array}
 \right\} \right).  $$ 
 At the next step, suppose that the procedure
 picks up the constraint $\alpha_{v_0}\alpha_{g}^{-1}\alpha_{\DT}^{-1} =
 1$.  By applying the second rule, the procedure generates the following state
 $$
 \left(
 \left\{
 \begin{array}{l}
  \alpha_\rho^{-1} \alpha_{\DT}^{-1} = 1
 \end{array}
 \right\},
 \left\{
 \begin{array}{l}
  \alpha_x \mapsto \alpha_{g} \alpha_{\DT}^2,
   \alpha_v \mapsto \alpha_{g}\alpha_{\DT},\\
   \alpha_t \mapsto \alpha_{\DT},
   \alpha_{t_0} \mapsto \alpha_{\DT},\\
   \alpha_{x_0} \mapsto \alpha_{g}\alpha_{\DT}^2,
    \alpha_{v_0} \mapsto \alpha_{g}\alpha_{\DT}
 \end{array}
 \right\} \right).  
 $$ 

 Then, with the second and last rules, the procedure obtains the
 following solution:

 $$
 \left\{
 \begin{array}{l}
  \alpha_x \mapsto \alpha_{g}\alpha_{\DT}^2,
   \alpha_v \mapsto \alpha_{g}\alpha_{\DT},\\
   \alpha_t \mapsto \alpha_{\DT},
   \alpha_{t_0} \mapsto \alpha_{\DT},\\
   \alpha_{x_0} \mapsto \alpha_{g}\alpha_{\DT}^2,
    \alpha_{v_0} \mapsto \alpha_{g}\alpha_{\DT},
    \alpha_{\rho} \mapsto \alpha_{\DT}^{-1}
 \end{array}
 \right\}.  
 $$

 By assigning the base degree $A$ to $\alpha_g$ and $T$ to
 $\alpha_{\DT}$, we have the following solution:

 $$
 \left\{
 \begin{array}{l}
  \alpha_x \mapsto A T^2,
   \alpha_v \mapsto A T,
   \alpha_t \mapsto T,
   \alpha_{t_0} \mapsto T,\\
   \alpha_{x_0} \mapsto A T^2,
    \alpha_{v_0} \mapsto A T,
    \alpha_{\rho} \mapsto T^{-1}
 \end{array}
 \right\}.  
 $$

 Notice the set of base degrees is different from that we used in
 Example~\ref{ex:gdegree}; in this example, the g-degree for the
 acceleration rates ($A$) is used as a base degree, whereas that for
 lengths ($L$) is used in Example~\ref{ex:gdegree}.  This happens
 because the order of the constraints chosen in an execution of the
 inference algorithm is nondeterministic.  Our results in the rest of
 this paper do not depend on a specific choice of base degrees.

 \paragraph{Limitation}
 
 A limitation of the current g-degree inference algorithm is that, even
 if a constant symbol in a program is intended to be of a g-degree other
 than $1$, it has to be of g-degree $1$ in the current type system.  For
 example, consider the program $c_\FREEFALL'$ obtained by replacing $g$
 in $c_\FREEFALL$ with $9.81$ and $\rho$ with $0.24$.  Then, the
 g-degrees of $v$ and $\DT$ are inferred to be $1$ due to the assignment
 $v := v - 9.8 \DT - 0.24 v \DT$ in $c_\FREEFALL'$: The constraints for
 this assignment generated by the inference algorithm is $\set{\alpha_v
 = \alpha_{\DT}, \alpha_{\DT} = \alpha_v \alpha_\DT, \alpha_v =
 \alpha_v}$, whose only solution is $\set{\alpha_v \mapsto 1,
 \alpha_{\DT} \mapsto 1}$.  This degenerated g-degrees are propagated to
 the other variables during the inference of $c_\FREEFALL'$, leading to
 the g-degree assignment in which all the variables have the g-degree
 $1$.  This g-degree assignment is not useful for the template-size
 reduction; any polynomial is a GH polynomial under this assignment.

 As a workaround, our current implementation that will be described in
 Section~\ref{sec:experiment} uses an extension that can assign a
 g-degree other than $1$ to each occurrence of a constant symbol by
 treating a constant symbol as a variable.  For example, for the
 following program \(\mathtt{sumpower}_d\): \((x,y,s) :=
 (X+1,0,\underline{1}); \WHILE x \ne 0 \DO \IF y = 0 \THEN (x, y) := (x
 - 1, x) \ELSE (s, y) := (s + y^d, y - 1)\), the inference algorithm
 treats the underlined occurrence of $1$ as a variable and assigns
 \(T^d\) to it; the other occurrences of $0$ and $1$ are given g-degree
 \(T\).  This g-degree assignment indeed produces a smaller template.

\end{example}

\section{Abstract semantics restricted to GH polynomials}
\label{sec:ghsemantics}

This section gives the main result of this paper: If there is an
algebraic invariant of \(c\) and \(\Gamma \p c\), then there exists an
algebraic invariant that consists of a GH polynomial under \(\Gamma\).

To state this result formally, we revise our abstract semantics by
restricting it to the domain of the GH polynomials.  The domain is
obtained by replacing the underlying set of the domain
\(\POWERSET(\POLY[x_1,\dots,x_n])\) with
\(\POWERSET(\POLY[x_1,\dots,x_n]_\Gamma)\).  This is a subset of
\(\POWERSET(\POLY[x_1,\dots,x_n])\) that is closed under arbitrary
meets.  We can define the abstraction and the concretization in the
same way as in Section~\ref{sec:language}.

The revised abstract semantics \(\WPAH{c}{\REM,\Gamma}\), which we
hereafter call \emph{GH abstract semantics}, is the same as the
original one except that it is parameterized over the g-degree
assignment \(\Gamma\).  In the following definition, we write
\(\REM(G,p)\) for \(\set{\REM(f, p)
  \mid f \in (G \cap \POLY[x_1,\dots,x_n]_\Gamma) \backslash
  \set{0}}\), the set of the remainder obtained from a GH
polynomial in \(G\) and \(p\).  We assume that our choice of $\REM$ is a remainder
operation such that whenever both $f$ and $p$ are GH polynomials, so
is $\REM(f, p)$.
\[
\footnotesize
\begin{array}{rl}
  \WPAH{\SKIP}{\REM,\Gamma}(G) & = G\\
  \WPAH{\ASSIGN{x}{p}}{\REM,\Gamma}(G) & = G[x:=p] \\
  \WPAH{\SEQ{c_1}{c_2}}{\REM,\Gamma}(G) &= \WPAH{c_1}{\REM,\Gamma}(\WPAH{c_2}{\REM,\Gamma}(G)) \\
  \WPAH{\IFTHENELSE{p=0}{c_1}{c_2}}{\REM,\Gamma}(G)
                  &= p \cdot \WPAH{c_2}{\REM,\Gamma}(G) \cup \REM(\WPAH{c_1}{\REM,\Gamma}(G), p) \\
  \WPAH{\SWHILE{p \ne 0}{c}}{\REM,\Gamma}(G)
                  &= \nu(\lambda H. p \cdot \WPAH{c}{\REM,\Gamma}(H)
                    \cup \REM(G, p))\\
  \WPAH{\SWHILE{p = 0}{c}}{\REM,\Gamma}(G)
                  &= \nu(\lambda H. p \cdot G
  \cup \REM(\WPAH{c}{\REM,\Gamma}(H), p)).
\end{array}
\]

The following theorem guarantees that the invariant found using the
semantics \(\WPAH{c}{\REM,\Gamma}\) is indeed an invariant of \(c\).
\begin{theorem}[Soundness of the GH abstract semantics]
  \label{thm:soundness}

  If \(\Gamma \p c\) and \(G\) is a set of GH polynomials under
  \(\Gamma\), then \(\WPAH{c}{\REM,\Gamma}(G) =
  \WPA{c}{\REM}(G)\).
\end{theorem}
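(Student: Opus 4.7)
I would proceed by structural induction on $c$, but strengthen the induction hypothesis to say: if $\Gamma \p c$ and $G \subseteq \POLY[x_1,\dots,x_n]_\Gamma$, then (a) $\WPA{c}{\REM}(G) = \WPAH{c}{\REM,\Gamma}(G)$ and (b) this set is itself contained in $\POLY[x_1,\dots,x_n]_\Gamma$. Strengthening to (b) is essential because the only place where the two semantics diverge syntactically is the clause $\REM(H,p)$: in $\WPAH{\cdot}{\REM,\Gamma}$ the set $H$ is first intersected with $\POLY[x_1,\dots,x_n]_\Gamma$, so whenever the intermediate value $H$ is already a set of GH polynomials the two definitions of $\REM(H,p)$ coincide.

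The cases $\SKIP$, $\ASSIGN{x}{p}$, $\SEQ{c_1}{c_2}$ and $\IFTHENELSE{p=0}{c_1}{c_2}$ are straightforward. In the assignment case, the rule \rn{T-Assign} gives $\Gamma(x) = \DDEG_\Gamma(p)$, so substituting $p$ for $x$ in a GH polynomial $q$ preserves the g-degree of $q$ under $\Gamma$; hence $G[x:=p] \subseteq \POLY[x_1,\dots,x_n]_\Gamma$. In the conditional case, rule \rn{T-If} makes the guard $p$ a GH polynomial, so by the standing assumption on $\REM$ the set $\REM(\WPA{c_1}{\REM}(G), p)$ consists of GH polynomials; moreover, by the induction hypothesis for $c_1$ the filter ``$\cap\,\POLY[x_1,\dots,x_n]_\Gamma$'' is a no-op, yielding both (a) and (b). Multiplication by the GH polynomial $p$ clearly preserves GH-ness, handling the $p \cdot \WPA{c_2}{\REM}(G)$ summand.

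The while case is the main obstacle, because we must show that a greatest fixed point computed in the ambient lattice $\POWERSET(\POLY[x_1,\dots,x_n])$ coincides with one computed inside the sublattice $\mathcal{L}_\Gamma := \POWERSET(\POLY[x_1,\dots,x_n]_\Gamma)$. The key observation is that $\mathcal{L}_\Gamma$ is a \emph{complete sublattice} of the ambient lattice: meets in $\vorder$ are set-theoretic unions, joins are intersections, and both operations preserve GH-ness; in particular the top element $\emptyset$ is shared. Writing $F$ and $F_H$ for the functionals whose greatest fixed points define the two semantics of $\SWHILE{p \bowtie 0}{c}$, the inductive hypothesis for $c$ together with the typing rule \rn{T-While} (which makes the guard $p$ GH) shows that $F$ restricts to $\mathcal{L}_\Gamma$ and agrees with $F_H$ there. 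By transfinite iteration from $\emptyset$, using closure of $\mathcal{L}_\Gamma$ under arbitrary meets at limit ordinals and the preservation property at successor stages, every approximant $F^\alpha(\emptyset)$ lies in $\mathcal{L}_\Gamma$ and equals $F_H^\alpha(\emptyset)$; taking the stabilising ordinal yields $\nu F = \nu F_H$ and membership in $\mathcal{L}_\Gamma$, establishing both (a) and (b) for the while case and completing the induction.
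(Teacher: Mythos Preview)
Your proposal is correct and follows the same approach as the paper, which simply says ``By induction on $c$.'' You have filled in the details that the paper omits, including the necessary strengthening of the induction hypothesis to carry along the fact that the computed set stays inside $\POLY[x_1,\dots,x_n]_\Gamma$, and the sublattice argument for the while case; these are exactly the points one must check to make the one-line proof go through.
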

\begin{proof}
  By induction on \(c\).
\end{proof}
This theorem implies that if $g$ is a GH polynomial under \(\Gamma\)
and $\WPAH{c}{\REM,\Gamma}(g) = \set{0}$, then $g$ is indeed a
solution of the postcondition problem.



Completeness of \(\WPAH{c}{\REM,\Gamma}\) is obtained as a corollary of the
following lemma.
\begin{lemma}
  \label{lem:homogeneityPreserved}
  Suppose \(\Gamma \p c\), \(g_1',\dots,g_m' \in
  \POLY[x_1,\dots,x_n]\), and \(g_i\) is a
  homogeneous component of \(g_i'\) (i.e., \(g_i = {g_i'}_{\tau_i}\)
  for some \(\tau_i\)).  If \(h \in
  \WPAH{c}{\REM,\Gamma}(\set{g_1,\dots,g_m})\), then there exists \(h'
  \in \WPA{c}{\REM}(\set{g_1',\dots,g_m'})\) such that \(h\) is a
  homogeneous component of \(h'\).
\end{lemma}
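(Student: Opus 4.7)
The plan is to prove the lemma by structural induction on $c$, constructing the witness $h'$ by case analysis on the syntax. The base cases are direct: for $\SKIP$, we have $h = g_i$ for some $i$, so $h' := g_i'$ works by hypothesis; for $\ASSIGN{x}{p}$, rule \rn{T-Assign} forces $\Gamma(x) = \DDEG_\Gamma(p)$, which implies that substituting $p$ for $x$ preserves g-degrees monomial-by-monomial. Consequently, substitution commutes with taking homogeneous components, so $(g_i'[x:=p])_\tau = g_i[x:=p] = h$ when $g_i$ is the $\tau$-component of $g_i'$, and we may take $h' := g_i'[x:=p]$. The sequencing case chains two applications of the induction hypothesis.

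For the conditional $\IFTHENELSE{p=0}{c_1}{c_2}$, rule \rn{T-If} guarantees that $p$ is GH, say of g-degree $\sigma$. Split $h$ according to the two summands of the union defining the abstract semantics. If $h = p \cdot h_2$ with $h_2 \in \WPAH{c_2}{\REM,\Gamma}(G)$, apply the IH to $c_2$ to obtain $h_2' \in \WPA{c_2}{\REM}(G')$ with $h_2$ a homogeneous component of $h_2'$; take $h' := p \cdot h_2'$. Because $p$ is GH, multiplication by $p$ shifts each homogeneous component by $\sigma$, so $h$ remains a component of $h'$. If instead $h = \REM(f,p)$ with GH $f \in \WPAH{c_1}{\REM,\Gamma}(G)$, apply the IH to $c_1$ to obtain $f' \in \WPA{c_1}{\REM}(G')$ with $f$ a component of $f'$; take $h' := \REM(f', p)$. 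Concluding that $h$ is a homogeneous component of $h'$ here rests on $\REM$ commuting with homogeneous decomposition along a GH divisor, i.e., $(\REM(f', p))_\tau = \REM((f')_\tau, p)$, a property enjoyed by standard Gr\"obner-style remainders because they are $K$-linear in the first argument.

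For the loops $\SWHILE{p \bowtie 0}{c}$, I proceed by fixed-point transport. The GH and non-GH loop functionals decompose into exactly the same two-summand shape as the conditional case (a $p \cdot (\cdot)$ part and a $\REM(\cdot, p)$ part), so the conditional reasoning above lifts iteration by iteration. Concretely, I plan to verify inductively on the Kleene approximants of the two greatest fixed points that the componentwise relationship is preserved at each step, then pass to the limit --- or, equivalently, to define the candidate set of non-GH witnesses by the componentwise closure of the GH fixed point and check that it constitutes a post-fixed point of the non-GH functional, invoking Knaster--Tarski.

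The main obstacle is the $\REM$ case: the argument demands that $\REM$ be compatible with homogeneous decomposition beyond merely preserving generalized homogeneity. A secondary delicacy is the loop case, where care is needed to ensure that the componentwise relation survives the fixed-point passage; once the conditional step is in hand, the loop argument is largely bookkeeping around the Kleene chain.
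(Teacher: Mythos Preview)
Your approach is essentially the same as the paper's: the paper also proceeds by structural induction on $c$, after reformulating the statement at the level of sets (``$G$ is a homogeneous component of $G'$'' meaning every element of $G$ is a homogeneous component of some element of $G'$), which is exactly the invariant you carry through the sequencing and loop cases. Your identification of the $\REM$ step as the crux is accurate and in fact more careful than the paper, which does not spell out that the inductive step for $\REM$ needs more than the stated hypothesis ``$\REM$ of two GH inputs is GH''; your proposed remedy (linearity of $\REM$ in its first argument, together with GH-preservation) is the natural sufficient condition and is what makes $(\REM(f',p))_\tau = \REM(f'_\tau,p)$ go through.
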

\begin{proof}
  Let us say \(G\) is a homogeneous component of \(G'\) under \(\Gamma\)
  if, for any \(p \in G\), there exists \(p' \in G'\) such that \(p =
  p'_\tau\) for some \(\tau\).  By induction on $c$, we can prove that
  if \(G\) is a homogeneous component of \(G'\) under \(\Gamma\), then
  $\WPAH{c}{\REM,\Gamma}(G)$ is a homogeneous component of
  $\WPAH{c}{\REM,\Gamma}(G')$ under \(\Gamma\).
\end{proof}

\begin{theorem}[Completeness]
  \label{cor:completeness}
  Let $g_i$ and $g_i'$ be the same as in
  Lemma~\ref{lem:homogeneityPreserved}.  If \(\Gamma \p c\) and
  $\WPA{c}{\REM}(\set{g_1', \dots, g_m'}) = \set{0}$, then
  $\WPAH{c}{\REM,\Gamma}(\set{g_1, \dots, g_m}) = \set{0}$.
\end{theorem}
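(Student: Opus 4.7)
The plan is to use Lemma~\ref{lem:homogeneityPreserved} as the engine and let the hypothesis do the rest. Take an arbitrary $h \in \WPAH{c}{\REM,\Gamma}(\set{g_1, \dots, g_m})$; the lemma furnishes some $h' \in \WPA{c}{\REM}(\set{g_1', \dots, g_m'})$ of which $h$ is a homogeneous component. Since by hypothesis $\WPA{c}{\REM}(\set{g_1', \dots, g_m'}) = \set{0}$, we must have $h' = 0$, and the zero polynomial has only $0$ itself as a homogeneous component in every g-degree. Hence $h = 0$, giving $\WPAH{c}{\REM,\Gamma}(\set{g_1, \dots, g_m}) \subseteq \set{0}$. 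This is already the information-theoretically meaningful half of the conclusion: by the Galois connection, $\gamma(\WPAH{c}{\REM,\Gamma}(\set{g_1, \dots, g_m})) \supseteq \gamma(\set{0}) = \STATES$, so each $g_i$ is indeed a solution to the postcondition problem for $c$.

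For the reverse inclusion $\set{0} \subseteq \WPAH{c}{\REM,\Gamma}(\set{g_1, \dots, g_m})$ (that the GH abstract value actually \emph{contains} $0$, as opposed to being empty), I would argue by a straightforward structural induction on $c$ that the GH semantics deposits $0$ whenever the original semantics does. In the base cases ($\SKIP$ and assignment), substitution sends $0$ to $0$; in the conditional, $p \cdot 0 = 0$ and the remainder clause preserve any explicit zero; in the $\SWHILE{p \bowtie 0}{c}$ cases, monotonicity of the underlying operator together with the parallel fixpoint characterisation ensures that a zero present in the original greatest fixpoint is mirrored in the GH greatest fixpoint. Since the hypothesis guarantees $0 \in \WPA{c}{\REM}(\set{g_1', \dots, g_m'})$, the mirrored bookkeeping produces $0$ inside $\WPAH{c}{\REM,\Gamma}(\set{g_1, \dots, g_m})$ as well.

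The only conceptual obstacle is the first inclusion, and it is already absorbed by Lemma~\ref{lem:homogeneityPreserved}; the theorem becomes a corollary. What would otherwise be the hard part---matching the homogeneous components of the original greatest fixpoint with the GH greatest fixpoint in the loop cases---is precisely where the induction inside the proof of the lemma does its real work. Given the lemma, the present proof reduces to one line for the main inclusion plus the routine induction sketched above for the converse.
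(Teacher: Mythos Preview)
Your first paragraph is exactly the paper's proof: take $h\in\WPAH{c}{\REM,\Gamma}(\{g_1,\dots,g_m\})$, apply Lemma~\ref{lem:homogeneityPreserved} to obtain $h'\in\WPA{c}{\REM}(\{g_1',\dots,g_m'\})=\{0\}$, and conclude $h=0$. The paper stops there and does not argue the reverse inclusion $\{0\}\subseteq\WPAH{c}{\REM,\Gamma}(\{g_1,\dots,g_m\})$ at all; it effectively treats the forward inclusion as the content of the theorem, which (as you yourself observe) is all that matters for the postcondition problem.

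Your second paragraph goes beyond the paper, and its reasoning is not quite right as stated. By definition $\REM(G,p)=\{\REM(f,p)\mid f\in G\setminus\{0\}\}$ (and likewise in the GH version), so the remainder clause \emph{discards} an explicit zero rather than preserving it; your claim that ``the remainder clause preserves any explicit zero'' is false. In the conditional and loop cases any surviving $0$ must therefore come through the $p\cdot(\ldots)$ summand, not through $\REM$. The conclusion you are after may still be salvageable along those lines (e.g., in the conditional, $0\in\WPAH{c_2}{\REM,\Gamma}(G)$ yields $p\cdot 0=0$ in the union), but the ``mirrored bookkeeping'' induction you sketch, and in particular the step for the remainder part, does not go through as written.
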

\begin{proof}
  Take $h \in \WPAH{c}{\REM,\Gamma} (\set{g_1, \dots, g_m})$.  Then
  there exists $h' \in \WPA{c}{\REM}(\enumSet{g_1', \dots, g_m'})$ such
  that $h'_{\DDEG(h)} = h$.  By assumption, we have $h' = 0$; therefore
  $h = 0$.
\end{proof}
Hence, if $g = 0$ is a solution of the postcondition problem,
then so is $g' = 0$ for every homogeneous component $g'$ of $g$.


\begin{example}
 \label{ex:wpah} Recall Example~\ref{ex:abstract}.
 Theorem~\ref{cor:completeness} and $\WPA{c_\FREEFALL}{\REM}(\set{p}) =
 \set{0}$ guarantee \(\WPAH{c_\FREEFALL}{\REM,\Gamma}(\set{p_1}) =
 \set{0}\) and \(\WPAH{c_\FREEFALL}{\REM,\Gamma}(\set{p_2}) = \set{0}\)
 since $p_1$ and $p_2$ are homogeneous components of $p$.
\end{example}

\section{Template-based algorithm}
\label{sec:template}

This section applies our idea to Cachera's template-based
invariant-synthesis algorithm~\cite{DBLP:journals/scp/CacheraJJK14}.  We
hereafter use metavariable \(a\) for a \emph{parameter} that represents
an unknown value.  We use metavariable \(A\) for a set of parameters.  A
\emph{template} on $A$ is an expression of the form \(a_1 p_1 + \dots +
a_n p_n\) where \(a_1, \dots, a_n \in A\); we abuse the metavariable
\(G\) for a set of templates.  We denote the set of templates on \(A\)
by \(T(A)\).  A \emph{valuation} \(v\) on \(A\) is a map from \(A\) to
\(K\).  We can regard $v$ as a map from $T(A)$ to $\POLY[x_1, \dots,
x_n]$ by $v(a_1 p_1 + \dots + a_m p_m) = v(a_1) p_1 + \dots + v(a_m)
p_m$.

\subsection{Algorithm proposed by Cachera et al.}
\label{sec:theirMethod}


Cachera et al. proposed a sound template-based algorithm for the
postcondition problem.  Their basic idea is to express a fixed point by
constraints on the parameters in a template in order to avoid
fixed-point iteration.

To recall the algorithm of Cachera et al., we establish several
definitions.
\begin{definition}
An \emph{equality constraint} on $A$ is an expression of the form
$\EQCONSTR{G}{G'}$, where $G, G' \subseteq T(A)$.  A \emph{constraint
set} on \(A\), or simply \emph{constraints}, is a set of equality
constraints on $A$; a constraint set is represented by the metavariable
\(C\).  We may write \((A,C)\) for a constraint set $C$ on $A$ to make
$A$ explicit.  A valuation $v$ on $A$ \emph{satisfies} an equality
constraint $\EQCONSTR{G}{G'}$ on $A$, written \(v \models
\EQCONSTR{G}{G'}\), if $v(G)$ and $v(G')$ generate the same ideal.  A
\emph{solution} of a constraint set $(A, C)$ is a valuation on $A$ that
satisfies all constraints in $C$.  If $v$ is a solution of $(A, C)$, we
write $v \models (A, C)$, or simply $v \models C$.  A template $a_1 p_1
+ \dots + a_m p_m$ is a \emph{GH template of g-degree $\tau$ under
$\Gamma$} if \(p_1,\dots,p_m\) are GH polynomials of g-degree \(\tau\).
\end{definition}

We extend the definition of the remainder computation to operate on
templates.
\begin{definition}
  \label{def:rem-par}
\(\REMP(A, f, p)\) is a pair $(A', f - pq)$ where $q$ is the most
general template of degree $\PDEG(f) - \PDEG(p)$, the parameters of
which are fresh; $A'$ is the set of the parameters appearing in $q$.
We write \(\REMP(A, \set{p_1,\dots,p_m}, p)\) for \((A',G')\), where
\((A_i, r_i) = \REMP(A, p_i, p)\) and \(A' = \bigcup A_i\) and \(G' =
\set{r_1,\dots,r_m}\).
\end{definition}
For example, if the set of variables is \(\set{x}\), then
\(\REMP(\emptyset, x^2, x + 1) = (\set{a_1,a_2}, x^2 - (a_1 x + a_2)(x
+ 1))\); the most general template of degree \(\PDEG(x^2) - \PDEG(x +
1) = 1\) with variable \(x\) is \(a_1 x + a_2\).  By expressing a
remainder using a template, we can postpone the choice of a remainder
operator to a later stage; for example, if we instantiate
\((a_1,a_2)\) with \((1,-1)\), then we have the standard remainder
operator on \(\REAL[x]\).

We recall the constraint generation algorithm proposed by Cachera et
al.  We write \((A_i, G_i,C_i)\) for \(\WPC{c_i}{\REMP}(A,G,C)\) in
each case of the following definition.
\[
\footnotesize
\begin{array}{rl}
  \WPC{\SKIP}{\REMP}(A, G, C) & = (A, G, C)\\
  \WPC{\ASSIGN{x}{p}}{\REMP}(A, G, C) & = (A, G[x:=p], C) \\
  \WPC{\SEQ{c_1}{c_2}}{\REMP}(A, G, C) &= \WPC{c_1}{\REMP}(\WPC{c_2}{\REMP}(A, G, C)) \\
  \WPC{\IFTHENELSE{p=0}{c_1}{c_2}}{\REMP}(A, G, C)
                     &= (A_3, p \cdot G_2 \cup G_3, C_1 \cup C_2)\\
  \mbox{where } & \begin{array}[t]{rcl}
                (A_3, G_3) &=& \REMP(A_1 \cup A_2, G_1, p)\\
                \end{array}\\
  \WPC{\SWHILE{p \bowtie 0}{c_1}}{\REMP} (A, G, C)
                     &= (A_1, G, C_1 \cup \{\EQCONSTR{G}{G_1}\})\\
\end{array}
\]
\(\WPC{c}{\REMP}(A, G, C)\) accumulates the generated parameters to
\(A\) and the generated constraints to \(C\).  \(A\) is augmented by
fresh parameters at the \(\IF\) statement where \(\REMP\) is called.
At a \(\WHILE\) statement, \(\EQCONSTR{G}{G_1}\) is added to the
constraint set to express the loop-invariant condition.

\begin{algorithm}[t]
  \footnotesize
  \caption{Inference of polynomial invariants.}
  \label{alg:invariant-inference}
  \begin{algorithmic}[1]
    \Procedure{$\INVINF$}{$c$, $d$}
    \State $g \gets \text{the most general template of degree $d$}$
    \State $A_0 \gets \text{the set of the parameters occurring in $g$}$
    \State $(A, G, C) \gets \WPC{c}{\REMP}(A_0, \set{g}, \emptyset)$
    \State \Return $v(g)$ where $v$ is a solution of $C \cup \set{
      \EQCONSTR{G}{\set{0}}}$
    \EndProcedure
  \end{algorithmic}
\end{algorithm}

Algorithm~\ref{alg:invariant-inference} solves the postcondition
problem with the constraint-generating subprocedure
\(\WPC{c}{\REMP}\).  This algorithm, given a program \(c\) and degree
\(d\), returns a set of postconditions that can be expressed by an
algebraic condition with degree \(d\) or lower.  The algorithm
generates the most general template \(g\) of degree \(d\) for the
postcondition and applies \(\WPC{c}{\REMP}\) to \(g\).  For the
returned set of polynomials \(G\) and the constraint set \(C\), the
algorithm computes a solution of \(C \cup \EQCONSTR{G}{\set{0}}\); the
equality constraint \(\EQCONSTR{G}{\set{0}}\) states that \(v(g) =
0\), where \(v\) is a solution of the constraint set \(C \cup
\EQCONSTR{G}{\set{0}}\), has to hold at the end of \(c\) regardless of
the initial state.

This algorithm is proved to be sound: If \(p \in \INVINF(c,d)\), then
\(p = 0\) holds at the end of \(c\) for any initial
states~\cite{DBLP:journals/scp/CacheraJJK14}.  Completeness was not
mentioned in their paper.

\begin{remark}
The algorithm requires a solver for the constraints of the form
\(\EQCONSTR{G}{G'}\).  This is the problem of finding \(v\) that equates
\(\Span{G}\) and \(\Span{G'}\); therefore, it can be solved using a
solver for the ideal membership problems~\cite{Cox:2007:IVA:1204670}.
To avoid high-cost computation, Cachera et al. proposed heuristics to
solve an equality constraint.
\end{remark}

\begin{example}
  \label{ex:template}
  We explain how \(\INVINF(c_\FREEFALL,3)\) works.  The algorithm
  generates a degree-3 template \(q(x,v,t,x_0,v_0,t_0,a,\DT,g,\rho)\)
  over \(\set{x,v,t,x_0,v_0,t_0,a,\DT,g,\rho}\).  The algorithm then
  generates the following constraints by
  \(\WPCH{c_\FREEFALL}{\REMP}\):
  \(\langle\enumSet{q(x,v,t,x_0,v_0,t_0,a,\DT,g,\rho)} \equiv \enumSet{q(x +
    v\DT,v - g \DT - \rho v
    \DT,t+\DT,x_0,v_0,t_0,a,\DT,g,\rho)}\rangle\) (from the body of
  the loop) and
  \(\EQCONSTR{\enumSet{q(x_0,v_0,t_0,x_0,v_0,t_0,a,\DT,g,\rho)}}{\enumSet{0}}\).
 By solving these constraints with a solver for ideal membership
 problems~\cite{Cox:2007:IVA:1204670} or with the heuristics proposed by
 Cachera et al.~\cite{DBLP:journals/scp/CacheraJJK14}, and by applying
 the solution to $q(x,v,t,x_0,v_0,t_0,a,\DT,g,\rho)$, we obtain \(p\) in Example~\ref{ex:concrete}.
\end{example}

\subsection{Restriction to GH templates}

We define a variation \(\WPCH{c}{\REMPH{\Gamma},\Gamma}\) of the constraint
generation algorithm in which we use only GH polynomial templates.
\(\WPCH{c}{\REMPH{\Gamma},\Gamma}\) differs from \(\WPC{c}{\REMP}\) in that it
is parameterized also over \(\Gamma\), not only over the remainder
operation used in the algorithm.  The remainder operator
\(\REMPH{\Gamma}H{\Gamma}(A, f, p)\) returns a pair $(A \cup A', f - pq)$ where
$q$ is the most general GH template with g-degree
$\DDEG(f)\DDEG(p)^{-1}$, with degree $\PDEG(f) - \PDEG(p)$, and with
fresh parameters; \(A'\) is the set of the parameters that appear in
\(q\).  \(\REMPH{\Gamma}(A, G, p)\) is defined in the
same way as Definition~\ref{def:rem-par} for a set \(G\) of polynomials.
We again write \((A_i, G_i,C_i)\) for
\(\WPC{c_i}{\REMPH{\Gamma}}(A,G,C)\) in each case of the following definition.
\[
\footnotesize
\begin{array}{rl}
  \WPCH{\SKIP}{\REMPH{\Gamma},\Gamma}(A, G, C) & = (A, G, C)\\
  \WPCH{\ASSIGN{x}{p}}{\REMPH{\Gamma},\Gamma}(A, G, C) & = (A, G[x:=p], C) \\
  \WPCH{\SEQ{c_1}{c_2}}{\REMPH{\Gamma},\Gamma}(A, G, C) &= \WPCH{c_1}{\REMPH{\Gamma},\Gamma}(\WPCH{c_2}{\REMPH{\Gamma},\Gamma}(A, G, C)) \\
  \WPCH{\IFTHENELSE{p=0}{c_1}{c_2}}{\REMPH{\Gamma},\Gamma} (A, G, C)
                     &= (A_3, p \cdot G_2 \cup G_3, C_1 \cup C_2)\\
  &\hspace{-1cm}\text{where }
  \begin{array}[t]{ll}
    (A_3, G_3) &= \REMPH{\Gamma}H{\Gamma}(A_1 \cup A_2, G_1, p) \\
  \end{array}\\
  \WPCH{\SWHILE{p \bowtie 0}{c_1}}{\REMPH{\Gamma},\Gamma} (A, G, C)
  &= (A_1, G, C_1 \cup \{\EQCONSTR{G}{G_1}\})
\end{array}
\]

Algorithm~\ref{alg:invariant-inference-gh} is a variant of
Algorithm~\ref{alg:invariant-inference}, in which we restrict a template
to GH one.

\begin{algorithm}[t]
  \footnotesize \caption{Inference of polynomial invariants (homogeneous
  version).}  \label{alg:invariant-inference-gh}
  \begin{algorithmic}[1]
    \Procedure{$\INVINFH$}{$c$, $d$, \(\Gamma\), $\tau$}
    \State $g \gets \text{the most general template of g-degree
      $\tau$ and degree $d$}$
    \State $A_0 \gets \text{the set of the parameters occurring in $g$}$
    \State $(A, G, C) \gets \WPCH{c}{\REMPH{\Gamma},\Gamma}(A_0, \set{g}, \emptyset)$
    \State \Return $v(g)$  where $v$ is a solution of
    $C \cup \set{\EQCONSTR{G}{\set{0}}}$
    \EndProcedure
  \end{algorithmic}
\end{algorithm}

The algorithm \(\INVINFH\) takes the input \(\tau\) that specifies the
g-degree of the invariant at the end of the program \(c\).  We have not
obtained a theoretical result for \(\tau\) to be passed to \(\INVINFH\)
so that it generates a good invariant.  However, during the experiments
in Section~\ref{sec:experiment}, we found that the following strategy
often works: \emph{Pass the g-degree of the monomial of interest}.  For
example, if we are interested in a property related to \(x\), then pass
\(\Gamma(x)\) (i.e., \(L\)) to \(\INVINFH\) for the invariant \(-gt^2
+gt_0^2 - 2tv + 2t_0v_0 + 2x - 2x_0 = 0\).  How to help a user to find
such ``monomial of her interest'' is left as an interesting future
direction.

The revised version of the invariant inference algorithm is sound; at
the point of writing, completeness of \(\INVINFH\) with respect to
\(\INVINF\) is open despite the completeness of
\(\WPAH{c}{\REM,\Gamma}\) with respect to \(\WPA{c}{\REM}\).
\begin{theorem}[Soundness]
  \label{thm:hconstr-constr-sound} Suppose $\Gamma \p c$,
  $d \in \NAT$, and $\tau \in \GDEGS$.  Set \(P_1\) to the set of
  polynomials that can be returned by \(\INVINFH(c, d, \tau)\); set
  \(P_2\) to those by \(\INVINF(c, d)\).  Then, \(P_1 \subseteq P_2\).
\end{theorem}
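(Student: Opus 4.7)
The plan is to reduce soundness of $\INVINFH$ to that of $\INVINF$ by showing that every successful run of the homogeneous algorithm can be mirrored by a run of the original one that returns the same polynomial. The key observation is that a GH template of g-degree $\tau$ is nothing but the generic template of the same degree in which every coefficient of a monomial whose g-degree differs from $\tau$ is forced to $0$; hence the two algorithms can be related through a parameter substitution.

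Concretely, I would fix a run of $\INVINFH(c,d,\tau)$ that returns a polynomial $p$, with GH template $g^H$, output $(A^H,G^H,C^H) = \WPCH{c}{\REMPH{\Gamma},\Gamma}(A_0^H,\set{g^H},\emptyset)$, and solution valuation $v^H$ of $C^H \cup \set{\EQCONSTR{G^H}{\set{0}}}$. In parallel I would run $\INVINF(c,d)$ with generic template $g = \sum_w b_w w$ over all monomials of degree at most $d$, making compatible nondeterministic choices, and producing $(A,G,C) = \WPC{c}{\REMP}(A_0,\set{g},\emptyset)$. I would define an initial substitution $\psi_0$ on $A_0$ by $\psi_0(b_w) := a_w$ when $\DDEG_\Gamma(w) = \tau$ (where $a_w$ is the corresponding parameter of $g^H$) and $\psi_0(b_w) := 0$ otherwise, so that $\psi_0(g) = g^H$. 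Then I would extend $\psi_0$ step by step along the recursion: each time $\WPC{\IFTHENELSE{p=0}{c_1}{c_2}}{\REMP}$ invokes $\REMP$ on some $f$, producing fresh parameters $b_{w'}$ for the most general quotient, the homogeneous version introduces fresh parameters $a_{w'}$ only for $w'$ with $\DDEG_\Gamma(w') = \DDEG_\Gamma(f)\DDEG_\Gamma(p)^{-1}$, so I set $\psi(b_{w'}) := a_{w'}$ on matching monomials and $\psi(b_{w'}) := 0$ otherwise.

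The central lemma, proved by induction on $c$, is that $\psi$ carries $G$ to $G^H$ element-wise and carries each equality constraint in $C$ to one in $C^H$. The inductive cases are routine because $\psi$ commutes with variable substitution $[x := p]$, with multiplication by $p$, and with both remainder constructions by the very choice of $\psi$. Given the lemma, the valuation $v$ on $A$ defined by $v(a) := v^H(\psi(a))$ satisfies $C \cup \set{\EQCONSTR{G}{\set{0}}}$ and yields $v(g) = v^H(g^H) = p$, placing $p$ in $P_2$. The main obstacle I expect is the bookkeeping of fresh parameters in the $\IF$ case: one must verify that the elements of $G_1$ produced inductively remain GH (so that $\DDEG_\Gamma(f)\DDEG_\Gamma(p)^{-1}$ is defined and the call to $\REMPH{\Gamma}$ is legitimate), which in turn relies on $\Gamma \p c$ together with the inductive invariant. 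The $\WHILE$ case additionally requires that the ideal-equality constraint $\EQCONSTR{G}{G_1}$ be preserved under $\psi$; this follows from $K$-linearity of $\psi$ in the parameters, but it is precisely the point where the inductive invariant must be stated strongly enough to cover both sides of the constraint simultaneously.
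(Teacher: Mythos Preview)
Your proposal is correct and follows essentially the same route as the paper. The paper packages your substitution $\psi$ as a \emph{renaming} relation $\iota:(A,G,C)\HLE(A',G',C')$ induced by an injection $\iota:A\to A'$ (with $\iota^*$ sending parameters outside the image to $0$, which is exactly your $\psi$), proves your ``central lemma'' by induction on $c$ as Lemma~\ref{lem:constr-monotone}, and then transports the valuation $v^H$ to $v':=v^H\circ\kappa^*$ just as you do; the bookkeeping concerns you flag in the $\IF$ and $\WHILE$ cases are precisely the content of that induction.
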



\section{Experiment}
\label{sec:experiment}

We implemented Algorithm~\ref{alg:invariant-inference-gh} and
conducted experiments.  Our implementation \(\mathrm{Fastind}_{dim}\)
takes a program \(c\), a maximum degree \(d\) of the template \(g\) in
the algorithm, and a monomial \(w\).  It conducts type inference of
\(c\) to generate \(\Gamma\) and calls
\(\INVINFH(c,d,\Gamma,\DDEG_\Gamma(w))\).  The type inference
algorithm is implemented with OCaml; the other parts (e.g., a solver
for ideal-equality constraints) are implemented with Mathematica.

\newcommand\SIGDIGIT[2]{#1 \ifnum#2=0 \else \times 10\ifnum#2=1 \else ^{#2} \fi \fi}
\newcommand\SD[2]{\(\SIGDIGIT{#1}{#2}\)}

To demonstrate the merit of our approach, we applied this implementation
to the benchmark used in the experiment by Cachera et
al.~\cite{DBLP:journals/scp/CacheraJJK14} and compared our result with
that of their implementation, which is called Fastind.  The entire
experiment was conducted on a MacBook Air 13-inch Mid 2013 model with a
1.7 GHz Intel Core i7 (with two cores, each of which has 256 KB of L2
cache) and 8 GB of RAM (1600 MHz DDR3).  The modules written in OCaml
were compiled with \verb|ocamlopt|.  The version of OCaml is 4.02.1.
The version of Mathematica is 10.0.1.0.  We refer the reader
to~\cite{benchmarksPolynomialInvariants,DBLP:journals/scp/CacheraJJK14,DBLP:journals/jsc/Rodriguez-CarbonellK07}
for detailed descriptions of each program in the benchmark.  Each
program contains a nested loop with a conditional branch (e.g.,
\texttt{dijkstra}), a sequential composition of loops (e.g.,
\texttt{divbin}), and nonlinear expressions (e.g., \texttt{petter(n)}.)
We generated a nonlinear invariant in each program.

Table~\ref{table:result} shows the result.  The column deg shows the
degree of the generated polynomial, \(t_{\mathit{sol}}\) shows the time
spent by the ideal-equality solver (ms), \(\# m\) shows the number of
monomials in the generated template, \(t_{\mathit{inf}}\) shows the time
spent by the dimension-type inference algorithm (ms), and
\(t_{\mathit{inf}}+t_{\mathit{sol}}\) shows the sum of
\(t_{\mathit{inf}}\) and \(t_{\mathit{sol}}\).  By comparing \(\# m\)
for Fastind with that of \(\mathrm{Fastind}_{dim}\), we can observe the
effect of the use of GH polynomials on the template sizes.  Comparison
of \(t_{\mathit{sol}}\) for Fastind with that of
\(\mathrm{Fastind}_{dim}\) suggests the effect on the constraint
reduction phase; comparison of \(t_{\mathit{sol}}\) for Fastind with
\(t_{\mathit{inf}}+t_{\mathit{sol}}\) for \(\mathrm{Fastind}_{dim}\)
suggests the overhead incurred by g-degree inference.

\begin{wraptable}{l}{0.5\linewidth}
  \tiny
  \begin{center}
    \begin{tabular}{l|c|c|c|c|c|c|c}
      Name & & \multicolumn{2}{c|}{Fastind} & \multicolumn{4}{c}{Fastind\(_{dim}\)} \\
      \hline
      & deg & \(t_{\textit{sol}}\) & \(\# m\) & \(t_{\textit{inf}}\) & \(t_{\textit{sol}}\) &
      \(t_{\textit{inf}}+t_{\textit{sol}}\) & \(\# m\) \\
      \hline
      dijkstra & 2  & 9.29  & 21  & 0.456 & 8.83  & 9.29 & 21 \\
      divbin   & 2  & 0.674 & 21  & 0.388 & 0.362 & 0.750 & 8 \\
      freire1  & 2  & 0.267 & 10  & 0.252 & 0.258 & 0.510 & 10 \\
      freire2  & 3  & 2.51  & 35  & 0.463 & 2.60  & 3.06 & 35 \\
      cohencu  & 3  & 1.74  & 35  & 0.434 & 0.668 & 1.10 & 20 \\
      fermat   & 2  & 0.669 & 21  & 0.583 & 0.669 & 1.25 & 21 \\
      wensley  & 2  & 104   & 21  & 0.436 & 28.5  & 28.9 & 9 \\
      euclidex & 2  & 1.85  & 45  & 1.55  & 1.39  & 2.94 & 36 \\
      lcm      & 2  & 0.811 & 28  & 0.513 & 0.538 & 1.05 & 21 \\
      prod4    & 3  & 31.6  & 84  & 0.149 & 2.78  & 2.93 & 35 \\
      knuth    & 3  & 137   & 220 & 4.59  & 136   & 141 & 220 \\
      mannadiv & 2  & 0.749 & 21  & 0.515 & 0.700 & 1.22 & 18 \\
      petter1  & 2  & 0.132 & 6   & 0.200 & 0.132 & 0.332 & 6 \\
      petter2  & 3  & 0.520 & 20  & 0.226 & 0.278 & 0.504 & 6 \\
      petter3  & 4  & 1.56  & 35  & 0.226 & 0.279 & 0.505 & 7 \\
      petter4  & 5  & 7.15  & 56  & 0.240 & 0.441 & 0.681 & 8 \\
      petter5  & 6  & 17.2  & 84  & 0.228 & 0.326 & 0.554 & 9 \\
      petter10 & 11 & 485   & 364 & 0.225 & 0.354 & 0.579 & 14 \\
      sumpower1 & 3 & 2.20  & 35  & 0.489 & 2.31  & 2.80 & 35 \\
      sumpower5 & 7 & 670   & 330 & 0.469 & 89.1  & 89.6 & 140\\
      \hline
    \end{tabular}
  \end{center}
  \caption{Experimental result.}
  \label{table:result}
\end{wraptable}

\paragraph{Discussion}

The size of the templates, measured as the number of monomials (\(\#
m\)), was reduced in 13 out of 20 programs by using GH polynomials.  The
value of \(t_{\mathit{sol}}\) decreased for these 13 programs; it is
almost the same for the other programs.  \(\# m\) did not decrease for
the other seven programs because the extension of the type inference
procedure mentioned above introduced useless auxiliary variables.  We
expect that such variables can be eliminated by using a more elaborate
program analysis.

By comparing \(t_{\mathit{sol}}\) for Fastind and
\(t_{\mathit{inf}}+t_{\mathit{sol}}\) for
\(\mathrm{Fastind}_{\mathit{dim}}\), we can observe that the inference
of the g-degree assignment sometimes incurs an overhead for the entire
execution time if the template generated by Fastind is sufficiently
small; therefore, Fastind is already efficient.  However, this overhead
is compensated in the programs for which Fastind requires more
computation time.

To summarize, our current approach is especially effective for a program
for which (1) the existing invariant-synthesis algorithm is less
efficient owing to the large size of the template and (2) a nontrivial
g-degree assignment can be inferred.  We expect that our approach will
be effective for a wider range of programs if we find a more competent
g-degree inference algorithm.



\section{Related work}
\label{sec:relatedwork}




The template-based algebraic invariant synthesis proposed to
date~\cite{DBLP:conf/popl/SankaranarayananSM04,DBLP:journals/scp/CacheraJJK14}
has focused on reducing the problem to constraint solving and solving
the generated constraints efficiently; strategies for generating a
template have not been the main issue.  A popular strategy for template
synthesis is to iteratively increase the degree of a template.  This
strategy suffers from an increase in the size of a template in the
iterations when the degree is high.

Our claim is that prior analysis of a program effectively reduces the
size of a template; we used the dimension type system for this purpose
in this paper inspired by the principle of quantity dimensions in the
area of physics.  Of course, there is a tradeoff between the cost of the
analysis and its effect on the template-size reduction; our experiments
suggest that the cost of dimension type inference is reasonable.

\emph{Semialgebraic invariants} (i.e., invariants written using
\emph{inequalities} on polynomials) are often useful for program
verification.  The template-based approach is also popular in
semialgebraic invariant synthesis.  One popular strategy in
template-based semialgebraic invariant synthesis is to reduce this
problem to one of semidefinite programming, for which many efficient
solvers are widely available.


As of this writing, it is an open problem whether our idea regarding GH
polynomials also applies to semialgebraic invariant synthesis; for
physically meaningful programs, at least, we guess that it is reasonable
to use GH polynomials because of the success of the quantity dimension
principle in the area of physics.  A possible approach to this problem
would be to investigate the relationship between GH polynomials and
Stengle's Postivstellensatz~\cite{stengle1974nullstellensatz}, which is
the theoretical foundation of the semidefinite-programming approach
mentioned above.  There is a homogeneous version of the Stengle's
Positivstellensatz~\cite[Theorem~II.2]{SmoothPositivstellensatz};
because the notion of homogeneity considered there is equivalent to
generalized homogeneity introduced in this paper, we conjecture that
this theorem provides a theoretical foundation of an approach to
semialgebraic invariant synthesis using GH polynomials.

Although the application of the quantity dimension principle to
program verification is novel, this principle has been a handy tool
for discovering hidden knowledge about a physical system.  A
well-known example in the field of hydrodynamics is the motion of a
fluid in a pipe~\cite{barenblatt1996scaling}.  One fundamental result
in this regard is that of Buckingham~\cite{PhysRev.4.345}, who stated
that \emph{any physically meaningful relationship among \(n\)
  quantities can be rewritten as one among \(n-r\) independent
  dimensionless quantities, where \(r\) is the number of the
  quantities of the base dimension.}  Investigating the implications
of this theorem in the context of our work is an important direction
for future work.

The term ``generalized homogeneity'' appears in various areas; according
to Hankey et al.~\cite{hankey1972systematic}, a function
\(f(x_1,\dots,x_n)\) is said to be generalized homogeneous if there are
\(a_1,\dots,a_n\) and \(a_f\) such that, for any positive \(\lambda\),
\(f(\lambda^{a_1}x_1,\dots,\lambda^{a_n}x_n) = \lambda^{a_f}
f(x_1,\dots,x_n)\).  Barenblatt~\cite{barenblatt1996scaling} points out
that the essence of the quantity dimension principle is generalized
homogeneity.  Although we believe our GH polynomials are related to the
standard definition, we have not fully investigated the relationship at
the time of writing.

\iffull Our idea (and the quantity dimension principle) seems to be
related to \emph{invariant theory}~\cite{neusel00:_invar} in
mathematics.  Invariant theory studies various mathematical structures
using invariant polynomials.  A well-known fact is that a ring of
invariants is generated by homogeneous polynomials~\cite[Chapter
7]{Cox:2007:IVA:1204670}; GH polynomials can be seen as a generalization
of the notion of degree.

The structure of \(\POLY[x_1,\dots,x_n]\) resulting from the notion of
the generalized degrees is an instance of \emph{graded rings} from ring
theory.  Concretely, \(R\) is said to be \emph{graded} over an Abelian
group \(\GRP\) if \(R\) is decomposed into the direct sum of a family of
additive subgroups \(\set{R_g \mid g \in \GRP}\) and these subgroups
satisfy $R_g \cdot R_h \subseteq R_{gh}$ for all $g, h \in \GRP$.  Then,
an element \(x \in R\) is said to be \emph{homogeneous of degree \(g\)}
if \(x \in R_g\).  We leave an investigation of how our method can be
viewed in this abstract setting as future work.  \else \fi


\section{Conclusion}
\label{sec:conclusion}

We presented a technique to reduce the size of a template used in
template-based invariant-synthesis algorithms.  Our technique is based
on the finding that, if an algebraic invariant of a program \(c\)
exists, then there is a GH invariant of \(c\); hence, we can reduce the
size of a template by synthesizing only a GH polynomial.  We presented
the theoretical development as a modification of the framework proposed
by Cachera et al. and empirically confirmed the effect of our approach
using the benchmark used by Cachera et al.  Although we used the
framework of Cachera et al. as a baseline, we believe that we can apply
our idea to the other template-based
methods~\cite{DBLP:conf/cav/0001LMN14,DBLP:conf/fmcad/SomenziB11,DBLP:conf/popl/SankaranarayananSM04,DBLP:journals/jsc/Rodriguez-CarbonellK07,DBLP:journals/ipl/Muller-OlmS04,DBLP:journals/scp/CacheraJJK14,DBLP:conf/sas/AdjeGM15}.


Our motivation behind the current work is safety verification of hybrid
systems, in which the template method is a popular strategy.  For
example, Gulwani et al.~\cite{DBLP:conf/cav/GulwaniT08} proposed a
method of reducing the safety condition of a hybrid system to
constraints on the parameters of a template by using Lie derivatives.
We expect our idea to be useful for expediting these verification
procedures.

In this regard, Suenaga et
al.~\cite{DBLP:conf/popl/SuenagaSH13,DBLP:conf/cav/HasuoS12,DBLP:conf/icalp/SuenagaH11}
have recently proposed a framework called \emph{nonstandard static
analysis}, in which one models the continuous behavior of a system as an
imperative or a stream-processing program using an \emph{infinitesimal}
value.  An advantage of modeling in this framework is that we can apply
program verification tools without an extension for dealing with
continuous dynamics.  However, their approach requires highly nonlinear
invariants for verification.  This makes it difficult to apply existing
tools, which do not handle nonlinear expressions well.  We expect that
the current technique will address this difficulty with their framework.

We are also interested in applying our idea to decision procedures and
satisfiability modulo theories (SMT) solvers.  Support of nonlinear
predicates is an emerging trend in many SMT solvers (e.g.,
Z3~\cite{DBLP:conf/tacas/MouraB08}).  Dai et
al.~\cite{DBLP:conf/cav/DaiXZ13} proposed an algorithm for generating
a semialgebraic Craig interpolant using semidefinite
programming~\cite{DBLP:conf/cav/DaiXZ13}.  Application of our approach
to these method is an interesting direction for future work.


\section*{Acknowledgment}

We appreciate annonymous reviewers, Toshimitsu Ushio, Naoki Kobayashi
and Atsushi Igarashi for their comments.  This work is partially
supported by JST PRESTO, JST CREST, KAKENHI 70633692, and in
collaboration with the Toyota Motor Corporation.

\bibliographystyle{splncs03}
\bibliography{main}

\iffull
\appendix

\section{Proof of Theorem~\ref{thm:hconstr-constr-sound}}


To prove Theorem~\ref{thm:hconstr-constr-sound}, we define
\emph{renaming} of parameters and constraints.

\begin{definition}
  For an injection \(\iota : A \ra A'\), we write \(\iota : (A,G,C)
  \HLE (A',G',C')\) if \(G' = \iota^*(G)\) and \(C' = \iota^*(C)\)
  where \(\iota^*\) maps \(a' \in \iota(A)\) to \(\iota^{-1}(a')\) and
  \(a' \in \iota(A' \backslash \iota(A))\) to \(0\).
\end{definition}

The injection \(\iota\) gives a renaming of parameters.  The relation
\(\iota : (A,G,C) \HLE (A',G',C')\) reads \(G\) and \(C\) are obtained
from \(G'\) and \(C'\) by renaming the parameters in \(\iota(A)\)
using \(\iota\) and substituting \(0\) to those not in \(\iota(A)\).

\begin{lemma}
  If \(\iota : (A,G,C) \HLE (A',G',C')\), then there exists \(\kappa\)
  such that (1) \(\kappa : \WPCH{c}{\REMP,\Gamma}(A,G,C) \HLE
  \WPC{c}{\REMP}(A',G',C')\) and (2) \(\kappa\) is an extension of
  \(\iota\).
  
  
  \label{lem:constr-monotone}
\end{lemma}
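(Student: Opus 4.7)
The plan is to prove the lemma by structural induction on $c$, producing at each step an extension $\kappa$ of $\iota$ that witnesses the $\HLE$ relation between the homogeneous and non-homogeneous constraint-generation outputs.

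The base cases are straightforward. For $c = \SKIP$ and $c = \ASSIGN{x}{p}$, the parameter set is unchanged, so $\kappa = \iota$ works; one needs only that substitution of $p$ for $x$ commutes with the parameter-renaming $\iota^*$, which is clear because $\iota^*$ acts only on parameter symbols, not program variables. For $c = \SEQ{c_1}{c_2}$, I would chain the IH: first apply it to $c_2$ with $\iota$ to obtain an extension $\kappa_2$, then apply it to $c_1$ with $\kappa_2$ to obtain the desired $\kappa$, which by transitivity extends $\iota$. The while case reduces similarly: apply the IH to the loop body to get $\kappa_1 : (A_1, G_1, C_1) \HLE (A'_1, G'_1, C'_1)$ extending $\iota$, and observe that the loop constraint $\EQCONSTR{G}{G_1}$ is mapped by $\kappa_1^*$ to $\EQCONSTR{G'}{G'_1}$ because $G$ already lives in the domain of $\iota$ (so $\kappa_1^*$ agrees with $\iota^*$ on it).

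The main obstacle is the $\IF$ case, where $\REMPH{\Gamma}$ and $\REMP$ produce different sets of fresh parameters. By definition, $\REMPH{\Gamma}(A, f, p)$ introduces only parameters for monomials whose g-degree is $\DDEG(f)\DDEG(p)^{-1}$, whereas $\REMP(A, f, p)$ introduces parameters for \emph{every} monomial of the appropriate total degree. The former parameter set is therefore in natural bijection with a subset of the latter. My construction would be: apply the IH to $c_1$ and $c_2$ to obtain compatible extensions $\kappa_1, \kappa_2$ of $\iota$ (compatible because both restrict to $\iota$ on $A$), form their union $\kappa_{12}$, and then extend $\kappa_{12}$ by sending each fresh parameter produced by $\REMPH{\Gamma}$ to its counterpart produced by $\REMP$. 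The remaining parameters on the non-homogeneous side — those attached to monomials of g-degree different from $\DDEG(f)\DDEG(p)^{-1}$ — fall outside the image of $\kappa$ and are therefore sent to $0$ by $\kappa^*$; this is precisely what collapses the full template quotient to its GH component, making the renamed $\REMP$-remainder equal to the $\REMPH{\Gamma}$-remainder.

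The delicate point to verify will be that the freshness conditions on parameters match up: the lemma implicitly assumes that fresh parameters are drawn from disjoint pools so that $\kappa_1$ and $\kappa_2$ can be combined without conflict, and that the pool used by $\REMPH{\Gamma}$ injects into the pool used by $\REMP$. Once this bookkeeping is fixed, each clause in the definitions of $\WPCH$ and $\WPC$ matches literally modulo $\kappa^*$, so the $\HLE$ relation follows. I expect no surprises beyond this combinatorial bookkeeping; the substantive content is simply that zeroing out the wrong-g-degree parameters in the full template recovers the GH template.
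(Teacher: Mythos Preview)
Your proposal is correct and takes essentially the same approach as the paper, which proves the lemma simply by ``Induction on the structure of $c$'' without further detail. Your case analysis---in particular the treatment of the $\IF$ case via an injection of the GH remainder parameters into the full remainder parameters, with the extra parameters zeroed out by $\kappa^*$---is exactly the bookkeeping that the paper elides.
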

\begin{proof}
  Induction on the structure of \(c\).
  \qed
\end{proof}


\begin{pfof}{Theorem~\ref{thm:hconstr-constr-sound}}
  Let $g \in T(A_0)$ be the most general template of generalized degree
  $\tau$ and degree $d$ and $g' \in T(A_0')$ be the most general
  template of degree $d$.  Without loss of generality, we assume \(A_0
  \subseteq A_0'\) and \(g' = g + g_1\) for some \(g_1 \in T(A_0'
  \backslash A_0)\).  Let $(A, G, C) = \WPCH{c}{\REMP,\Gamma}(A_0,
  \set{g}, \emptyset)$ and $(A', G', C') = \WPC{c}{\REMP}(A_0',
  \set{g'}, \emptyset)$.  Then, from Lemma~\ref{lem:constr-monotone},
  there exists \(\kappa\) such that \(\kappa : (A, G, C) \HLE (A', G',
  C')\) and \(\kappa\) is an extension of the inclusion mapping \(\iota
  : A_0 \ra A_0'\).  Suppose \(v(g)\) is a result of \(\INVINFH(c, d,
  \tau)\) where \(v\) is a solution to \(C \cup
  \set{\EQCONSTR{G}{\set{0}}}\).  
  Define a valuation \(v'\) on \(A'\) by
  \[
  v'(a') =
  \left\{
  \begin{array}{ll}
    v(a) & \mbox{\(a' = \kappa(a)\) for some \(a \in A\)}\\
    0 & \mbox{Otherwise.}
  \end{array}
  \right.
  \]
  Then, \(v'(g') = v'(g + g_1) = v'(g)\); the second equation holds
  because \(v'(a')\) is constantly \(0\) on any \(a' \in A' \backslash
  A\).  All the parameters in \(g\) are in \(A_0\) and \(\kappa\) is
  an identity on \(A_0\).  Therefore, \(v'(g) = v(g)\).
  It suffices to show that \(v' \models C' \cup
  \set{\EQCONSTR{G'}{\set{0}}}\), which indeed holds from the
  definition of \(v'\) since \(v \models C \cup
  \set{\EQCONSTR{G}{\set{0}}}\) and \(C\) and \(G\) are renaming of
  \(C'\) and \(G'\).
\end{pfof}

\else
\fi

\end{document}